\documentclass[reprint, amsmath, amssymb, showpacs, floatfix, longbibliography, prl, aps, twocolumn, superscriptaddress]{revtex4-1}
\usepackage[T1]{fontenc}
\usepackage{graphicx}
\usepackage{dcolumn}
\usepackage{amsmath,amssymb,amsfonts,bm,dsfont,units}
\usepackage{hyperref}
\hypersetup{colorlinks=true,citecolor=blue,linkcolor=blue, urlcolor=blue}
\hypersetup{linktocpage}
\usepackage{environ}
\usepackage{braket}
\usepackage{float}
\usepackage{latexsym}
\usepackage{multirow}
\usepackage[normalem]{ulem} 
\usepackage[caption=false]{subfig} 
\usepackage{ragged2e}
\usepackage{xcolor}
\usepackage{graphicx} 
\usepackage{enumitem}
\usepackage{tikz}
\usepackage{algorithm}
\usepackage{amsthm}

\definecolor{myblue}{RGB}{0,90,160}
\definecolor{myred}{RGB}{200,40,40}

\newtheorem{theorem}{Theorem}
\newtheorem*{theorem*}{Theorem}

\usepackage{color}
\definecolor{dkgreen}{rgb}{0,0.5,0}
\definecolor{midnightblue}{rgb}{0.39,0.58,0.93}

\definecolor{gold}{rgb}{0.85, 0.65, 0.13}

\begin{document}

\title{Quantum metrology via partial quantum error correction}
\author{Yinan Chen}\thanks{Equal contribution}
\email{yinanc@caltech.edu}
\affiliation{Department of Physics and Institute for Quantum Information and Matter, California Institute of Technology, Pasadena, California, 91125, USA}
\author{Zongyuan Wang}\thanks{Equal contribution}
\email{zongyuan.wang15@outlook.com}
\affiliation{International Center for Quantum Materials, School of Physics, Peking University, Beijing 100871, China}
\author{Sisi Zhou}
\email{sisi.zhou26@gmail.com}
\affiliation{Perimeter Institute for Theoretical Physics, Waterloo, Ontario, N2L 2Y5, Canada}
\affiliation{Department of Physics and Astronomy, Department of Applied Mathematics, and Institute for Quantum Computing, University of Waterloo, Ontario N2L 3G1, Canada}
\date{June 10, 2026}

\begin{abstract}

We introduce a method for error-corrected quantum metrology where only \textit{partial} quantum error correction (QEC) is needed to suppress local noise and maintain the probe states' super-standard-quantum-limit (super-SQL) sensing performance. This stands in contrast to the existing QEC-assisted sensing schemes in \textit{Phys.~Rev.~Lett.~112,~080801~(2014)} and \textit{Phys.~Rev. Lett.~112,~150802~(2014)}, where a probe state is encoded into the logical subspace of a quantum code and error correction involves measurements on all checks of the code. Here, we encode the probe states into superpositions of energetically different states of the underlying quantum code. For our probe states, error correction using a subset of checks is enough to suppress noise both before and after phase imprinting. We analyze the tradeoff in noise suppression. For noise parallel to our phase imprinter of weight $l$, we achieve a suppression of $p^\delta$ where $p$ is the noise strength and $\delta = \lfloor (l+1)/2 \rfloor$. We propose an adaptive imprinter weight increasing strategy to maintain super-SQL performance as we scale up the system. In all our examples, checks and phase imprinters are chosen to be local operators avoiding non-local connectivity.

\end{abstract}
\maketitle

\noindent\textit{\textbf{Introduction---}} Quantum metrology is one of the central components of quantum technology. It exploits many-body entanglement to estimate one or more parameters of a quantum system beyond the classically achievable limit, the standard quantum limit (SQL) \cite{Wasilewski2010,Wolfgramm2010,Aasi2013,Tse2019}. The fundamental precision for such estimation tasks is given by the quantum Fisher information (QFI) \cite{toth2012}. For a probe state $\rho_\theta$ imprinted with an unknown parameter $\theta$, the estimation error is lower bound by the Cram\'er–Rao bound (CRB) \cite{Hauke2016} $\delta\theta \geq 1/\sqrt{F_q[\rho_\theta]}$
where $F_q[\rho_\theta]$ is the QFI of $\rho_\theta$. A larger QFI leads to a higher achievable precision. Without entanglement, QFI scales at best linearly with the size of the probe system $N$ (i.e. $F_q[\rho_\theta] \sim N$), which yields the SQL bound of $\delta \theta \geq  1/\sqrt{N}$. On the other hand, if the probe state is an entangled many-body state, then the ultimate quantum mechanically achievable limit, the Heisenberg Limit (HL), of $\delta \theta \geq 1/N$ is obtainable, i.e. $F_q[\rho_\theta] \sim N^2$. We will call the scaling of $F_q[\rho_\theta] \sim N^\nu$ for $1<\nu\leq 2$ super-SQL \footnote{We note that in the presence of noise, the scaling behavior depends on the quantum codes, the noise strength, and the imprinter. We analyze this when we present our example protocols.}.

However, good probe states, such as the Greenberger–Horne–Zeilinger (GHZ) state \cite{Giovannetti2006}, are susceptible to noise \cite{Ma2011,yinan2025,chen2026,PhysRevA.104.032407,LopezIncera2019allmacroscopic,demkowicz-dobrzanski_elusive_2012}. They can easily decohere, and thus, lose their ability to perform high precision estimations. Error-corrected quantum metrology addresses exactly this problem. Using techniques of quantum error-correction, we may preserve the entangled structure of the probe states before and after phase imprinting. Until now, the main methodology of error-corrected quantum metrology is to encode the GHZ state into the logical subspace of $M$ quantum codes as $\ket{\text{GHZ}}_L= \frac{1}{\sqrt{2}}\Big(\ket{0}^{\otimes M}_L + \ket{1}^{\otimes M}_L\Big)$ for $M \in \mathbb{Z}_{\geq 1}$ \cite{W2014,PhysRevLett.112.150802,ozeri2013}. To maintain the logical GHZ state, one requires error-correction using all checks of the code(s).

Here, through examples, we argue that full quantum error correction as in Refs.~\cite{W2014,PhysRevLett.112.150802,ozeri2013} is not necessary for metrology. A \textit{partial} quantum error correction on the underlying quantum codes is enough to suppress the noise and to maintain the super-SQL precision. Specifically, we encode the GHZ state into a probe state that is a superposition of two energetically different states of the quantum code. We show that a judicious selection of checks together with one additional operator called the \textit{optimal measurement} or \textit{symmetry operator} \footnote{The terminology ``optimal measurement'' comes from Ref.~\cite{chen2026}. It is the measurement operator, given the probes states and the imprinters, that saturates the CRB. In this Letter, the optimal measurement operator is the strong symmetry (see Ref.~\cite{Lessa2025} for definition in condensed matter physics) operator of the probe state by construction. Hence, we use the two names interchangeably.}, commuting with the selected checks, is enough to perform error-correction on such a probe state and to maintain super-SQL precision. Moving beyond the full error correction protocols \cite{W2014,PhysRevLett.112.150802,ozeri2013}, we can suppress both perpendicular and parallel noise to our imprinter before and after phase imprinting. After phase imprinting, error correction for the parallel noise is impossible \cite{Zhou2018}. However, we can suppress the parallel noise at the cost of both increased operator weight for the phase imprinter and reduced sensing performance against the perpendicular noise. We present our examples in terms of quantum codes familiar in quantum computation, providing a proposal for an application of quantum computation platforms for metrological purposes \cite{acharya_quantum_2025,lacroix_scaling_2025,bluvstein_fault-tolerant_2026,reichardt2025faulttolerantquantumcomputationneutral,ransford2025helios98qubittrappedionquantum}.

This Letter is organized as follows. We begin by presenting our GHZ-like probe states in the context of Calderbank–Shor–Steane (CSS) codes \cite{steane_multiple-particle_1996,PhysRevA.54.1098}. We discuss how perpendicular and parallel noises can be suppressed while preserving super-SQL sensitivity. Then, we present our partial error correction protocols together with our adaptive weight strategy to maintain super-SQL as the system scales. 
We first discuss noise at the stage of after perfect state preparation but prior to phase imprinting. We assume perfect measurements. We focus on the 2D toric codes and the Bacon-Shor codes, although our scheme generalizes straightforwardly to other quantum codes supporting our construction. We finish with a discussion of noise after phase imprinting.


\noindent\textit{\textbf{The probe state---}} Every stabilizer generator of a CSS code can be arranged into either a product of Pauli-$X$ operators or a product of Pauli-$Z$ operators as $ \braket{S_{X,i},S_{Z,j}}$ where $i,j$ index the stabilizer generators. We construct a GHZ-like probe state as \footnote{We note that the construction requires the two stabilizer sectors appearing in
Eq.~\eqref{eq:CSS} to be nonempty. In particular, if the chosen $Z$-type checks
obey relations of the form $\prod_j S_{Z,j}=I$, then the assigned eigenvalues
$s_j$ must satisfy $\prod_j s_j=1$. Thus, the all-$-1$ sector is allowed only
when every such relation contains an even number of checks. More generally,
when stabilizer constraints obstruct the all-$-1$ sector, one may instead choose
a nearby allowed syndrome sector, for example by flipping or omitting $O(1)$
selected checks, provided all stabilizer constraints remain satisfied. Such a
finite modification changes the generator-eigenvalue separation only by $O(1)$,
and therefore does not affect the leading $N_Z$-scaling.}
\begin{equation}
    \ket{\text{CSS}}=\frac{1}{\sqrt{2}}\Big(\ket{S_{Z}=1\cdots1}+\ket{S_{Z}=-1\cdots-1}\Big) \, ,\label{eq:CSS}
\end{equation}
where $\ket{S_{Z}=1\cdots1}$ is the $+1$ eigenstate of all $S_{X,i}$'s and all the $S_{Z,j}$'s, and $\ket{S_{Z}=-1\cdots-1}$ is the eigenstate with $+1$ for all $S_{X,i}$'s but $-1$ for all $S_{Z,j}$'s. We denote the number of $S_X$ as $N_X$ and that of $S_Z$ as $N_Z$. To simplify the presentation of our scheme, we assume that all $S_Z$ have the same weight $l$ for the rest of this Letter. We also assume there is an $X$-type operator $T_X$ having odd overlap with every chosen $Z$-type check. $T_X$ is therefore a symmetry of $\ket{\text{CSS}}$ i.e. $T_X \ket{\text{CSS}} = \ket{\text{CSS}}$. The imprinter is chosen as $U(\theta)=\exp\left[i\theta\sum_{j}S_{Z,j}\right]$, which is a product of local unitaries $\exp\left[i\theta S_{Z,i}\right]$ whose support is the $i$-th $Z$-type check. The QFI for the pure state $\ket{\mathrm{CSS}}$ with respect to $U(\theta)$ is $F_q = 4N_Z^2$, saturating the Cramér–Rao bound. The corresponding SQL, $4N_Z$, is defined as the QFI of the product state $[\frac{1}{\sqrt{2}}\left(\ket{S_Z=1} + \ket{S_Z=-1}\right)]^{\otimes N_Z}$. Later in this work, we also use HL and SQL with respect to the number of \textit{physical} qubits by replacing $N_Z \to N$ where $N$ stands for the number of physical qubits. We note that although $T_X$ is a global operator, there exists a method to measure it, through an ancilla system, by local operators \cite{williamson_low-overhead_2026}, see also Remark 3 in the Supplemental Material (SM).

For the above $\ket{\mathrm{CSS}}$, the optimal measurement can be chosen as $T_X$ \cite{chen2026}. Note that since we have used all the $S_Z$ stabilizers for the imprinter, we can no longer use them for error correction. Specifically, we use $S_X$ for error correction either before or after phase imprinting, since $\left[S_X, S_Z\right] = 0$. The $T_X$ measurement is performed only before phase imprinting to prepare the probe state in Eq.~\eqref{eq:CSS}. After phase imprinting, it is instead used as the optimal measurement for phase readout. Its optimality is discussed in the SM.

\noindent\textit{\textbf{Perpendicular and parallel noise in partial error correction---}} We discuss the i.i.d. perpendicular ($X$) and i.i.d. parallel ($Z$) errors acting on $\ket{\text{CSS}}$, prior to phase imprinting. We denote local dephasing along $Z$ the parallel error because it commutes with the imprinter $U(\theta)$. We begin by discussing error occurring before phase imprinting as this represents the most destructive effect \cite{Datta_2011,Dooley_2016}. We monitor only the $S_{X}$ syndromes and correct only the i.i.d. dephasing noise $E_{Z,j}[\rho]=(1-p)\rho+pZ_j\rho Z_j$ where $p$ is the noise strength. As the syndrome measurement and recovery channel commute with the imprinter $U(\theta)$, error correction can be conducted in any stage during sensing before the final measurement using $T_X$. Here, we apply error correction right before the final measurement and discuss the effect on sensing under i.i.d.~$X/Z$ noise.

\textbf{i.i.d.~$\bm{X}$-noise.} Although partial error correction acts trivially here, we begin with the general observation: For errors before phase imprinting, HL scaling is always maintained if the noise channel respects $T_X$, namely $T_X \prod_i E_i(\rho) = \prod_i E_i(\rho) T_X = \prod_i E_i(\rho)$ with $\rho=\ket{\text{CSS}}\bra{\text{CSS}}$ \cite{frerot2024symmetry,chen2026}. For the i.i.d.~$X$ noise, the variance of $O_S=\sum_{j}S_{Z,j}$ is evaluated via the dual-channel: $\mathrm{Tr}[\prod_i E_{X,i}(\rho) O_S^2] = \mathrm{Tr}[\rho \prod_i E_{X,i}^\dagger(O_S^2)] = \sum_{j,k} \mathrm{Tr}[\rho \prod_i E_{X,i}^\dagger(S_{Z,j} S_{Z,k})]$, where $E_i^\dagger$ denotes the dual channel of $E_i$. If the local error $E_i$ commutes with $S_{Z,j} S_{Z,k}$, it acts trivially and $E_{X,i}^\dagger(S_{Z,j} S_{Z,k}) = S_{Z,j} S_{Z,k}$; otherwise, it yields $(1 - 2 p) S_{Z,j} S_{Z,k}$. QFI under i.i.d. $X$ noise then takes the form $F_{q,X} = \alpha N_Z^2 + \mathcal{O}(N_Z)$ with $\alpha = 4(1 - 2 p)^{2l}$, where $l$ is the weight of $S_{Z}$ stabilizer generators \cite{chen2026} (see also SM). Importantly, the prefactor $\alpha$ is an $\mathcal{O}(1)$ function of the stabilizer weight $l$ and $p$, which does not scale with $N_Z$, confirming that HL (w.r.t.~$N_Z$) is reachable. We note that the conclusion applies to other noise channels as long as the symmetry is strongly preserved. Finally, the i.i.d.~$X$ noise after phase imprinting $U(\theta)$ does not change QFI, since $E_{X,j}^\dagger[T_X]=T_X$ and the CRB remains intact.

\begin{figure}[b]
    \centering
    \includegraphics[width=0.8\linewidth]{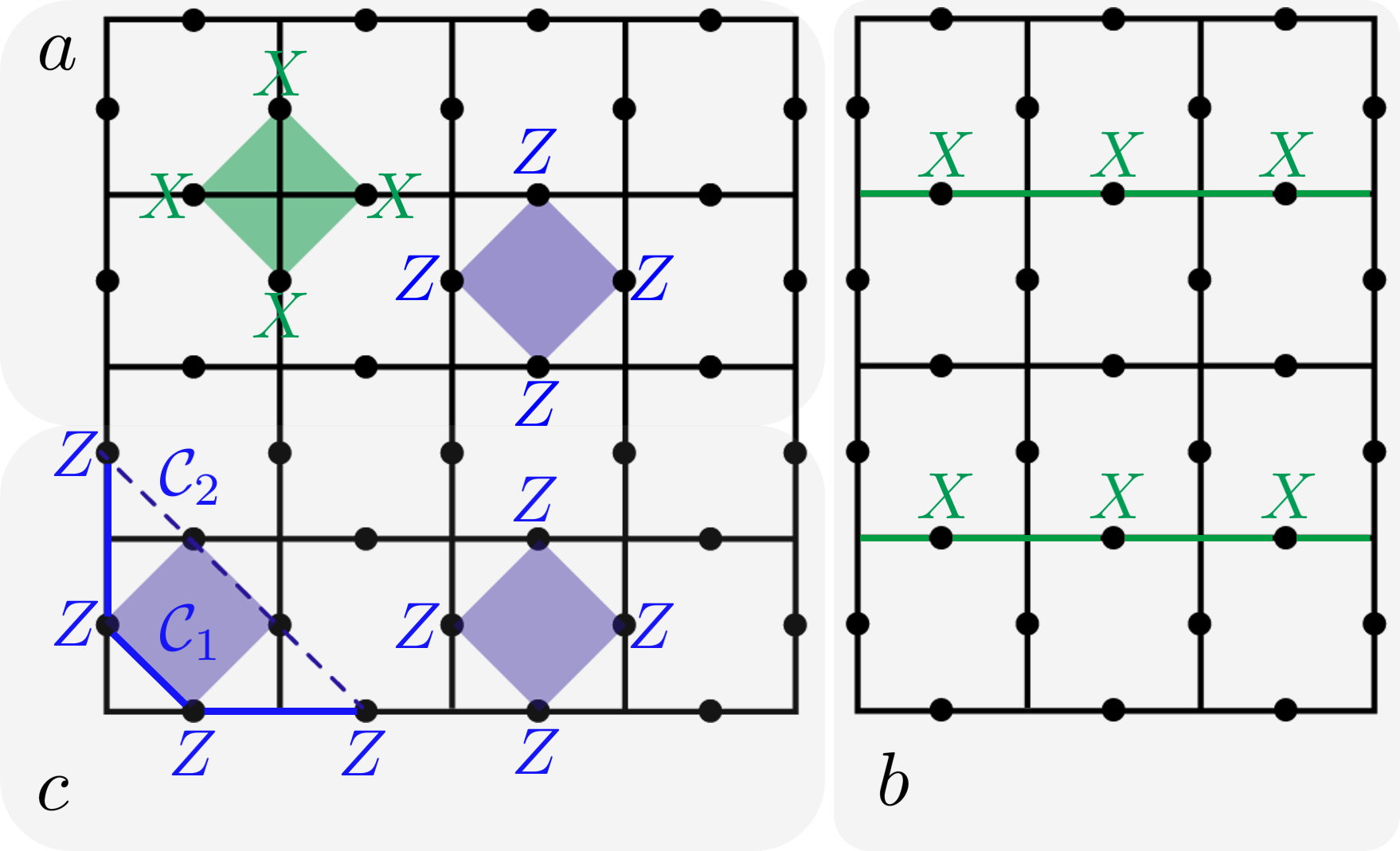}
    \vspace{-3mm}
    \caption{An example of the toric-code protocol. (a) vertex $S_{X}$ stabilizers, plaquette $S_{Z}$ stabilizers. (b) construction of the detection operator $\prod_{j \in \includegraphics[width=0.2cm, height=0.15cm]{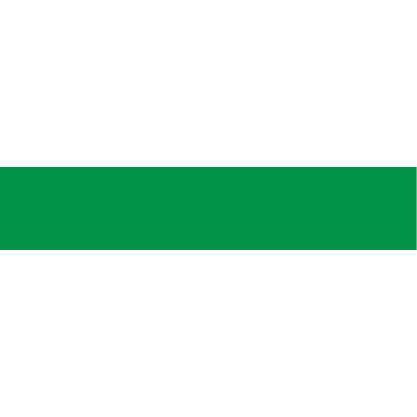}} X_j$. (c) a false choice of the path $\mathcal{C}_{2}$ (dashed line) forms a dephasing loop of $Z$ errors with the string $\mathcal{C}_{1}$ (solid line).}
    \label{fig:toric_code}
\end{figure}

\textbf{i.i.d.~$\bm{Z}$-noise.} 
For errors that do not preserve the $T_X$ symmetry, residual error patterns with $T_X=-1$ lead to dephasing. For i.i.d.~$Z$ noise, error correction based on the $X$ syndromes leaves residual patterns equivalent either to products of $Z$ stabilizers or to logical $Z$ operators. We focus on the regime where dephasing is dominated by small contractible loops, and hence logical $Z$ errors are neglected. A residual loop containing an odd number of $Z$ stabilizers produces a $\pi$-phase jump, $\braket{T_X}_{\rm CSS}\rightarrow-\braket{T_X}_{\rm CSS}$.

For $p\ll1$, the leading local stabilizer-sector dephasing probability can be estimated by considering two types of events. First, an error pattern may form a full $Z$ stabilizer $S_{Z,i}$ of weight $l$, with probability proportional to $p^l$. Second, and typically dominantly, an error pattern may occupy at least half of the qubits in some $S_{Z,i}$, causing the decoder to choose an incorrect matching path and thereby induce a local dephasing event; see Fig.~\ref{fig:toric_code}. This contribution scales as $p^{\lfloor(l+1)/2\rfloor}$. Thus, neglecting logical errors and higher-order correlations, we approximate the error-plus-correction process by an effective i.i.d.~$S_Z$-dephasing channel with $p_{\rm eff}=c p^\delta+O(p^{\delta+1})$, where $\delta=\lfloor(l+1)/2\rfloor$ and $c=\mathcal{O}(1)$ depends on the geometry and decoder. The QFI under this i.i.d.-$S_{Z}$ channel scales as: $F_{q,Z}=4(1-2cp^{\delta})^{2N_Z}N_Z^2$. In general, for large lattice size $N_Z$, $F_{q,Z}$ decays exponentially. Nevertheless, the polynomial suppression of the effective noise strength, $p\rightarrow p^{\delta}$, is already an improvement resulting from error correction. See the SM for a detailed analysis of $F_{q,Z}$ and the decoding schemes.

\textbf{$\bm{X/Z}$~trade-off.} We find the following trade-off: For fixed $l$, we have seen the HL scaling, with respect to the number of stabilizers, of the QFI $F_{q,X}$ under i.i.d.~$X$ noise, which persists up to $p=0.5$ where it drops to the SQL $4N_Z$ (w.r.t. the number of stabilizers). On the other hand, the QFI under i.i.d.~$Z$ noise is much closer to the HL at small $p$ due to the extra suppression $p^{\delta}$ in $F_{q,Z}$. The leading order for $F_{q,Z}$ and $F_{q,X}$ at the noise strength $p_s$ is determined by $(1-2p_s)^{2l}$ and $(1-cp_s^{\delta})^{2N_Z}$. Increasing $l$ then suppresses the $F_{q,X}$ while boosting $F_{q,Z}$ exponentially closer to $1$. We therefore ask: \textit{is there an optimal choice of $l$ for different numbers of physical qubits such that both $F_{q,X}$ and $F_{q,Z}$ perform better than the SQL?} In comparison with the full QEC \cite{W2014,PhysRevLett.112.150802,ozeri2013}, we remark that although perpendicular noises are usually viewed benign \cite{Zhang2019}, it becomes highly relevant when increasing the weight of the phase imprinter. In the next subsection, we treat the $X$ and $Z$ noise on equal footing and show that partial error correction can answer the above question in the affirmative.

\noindent\textit{\textbf{The protocols---}} We now discuss examples of the partial QEC scheme. We begin with the trivial example of bare GHZ protocol where error correction is trivial and then move on to the toric code and Bacon-Shor code.

\textbf{GHZ state.} As a warm-up, we start with the case where $\ket{0}_L=\ket{0}^{\otimes N}$, $S_{Z,i}=Z_i$, and $T_X=\prod_j X_j$. The projection onto the $+1$ eigenspace of $T_X$ yields the GHZ state $\ket{\text{GHZ}}=\frac{1}{\sqrt{2}}\left[\ket{0}^{\otimes N}+\ket{1}^{\otimes N}\right]$. We treat this as the trivial CSS code with no $S_{X}$ at all. The phase imprinter is the standard $U(\theta)=e^{i\theta\sum_jZ_{j}}$. Under the i.i.d. $X$ channel $\mathcal{E}_{X}=\prod_{j}\mathcal{E}_{X,j}$ that respects the symmetry operator $T_X=\prod_{j}X_{j}$, the QFI equals four times the variance of $\sum_{j}Z_j$ of the output state. Using the dual channel trick, we obtain $\prod_{j}\mathcal{E}^{*}_{x,j}\left[\left(\sum_{i}Z_i\right)^2\right]=(1-2p)^{2}\left(\sum_{i}Z_{i}\right)^2+\mathcal{O}(N)$, which gives the QFI $F_{q,X}=4(1-2p)^2N^2+16p(1-p)N$ at $l=1$. The computation of $F_{q,X}$ follows the steps in \cite{chen2026} and is given in the SM. The i.i.d.~$Z$ noise is the most detrimental as $\delta=1$ gives no noise suppression, leading to $F_{q,Z}=4(1-2p)^{2N}N^2$.

\begin{figure}[b]
    \centering
    \includegraphics[width=1\linewidth]{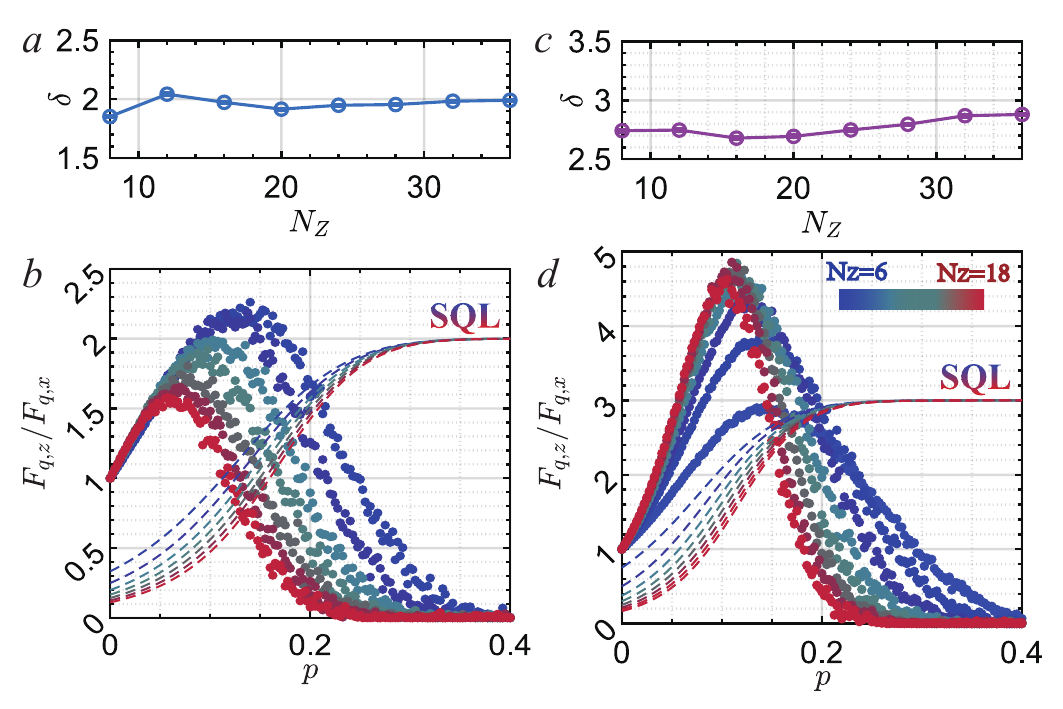}
    \vspace{-9mm}
    \caption{QFI for toric codes on the square and the honeycomb lattices under different Pauli noise. (a) fitted exponent $\delta$ in $F_q = 4(1 - c p^{\delta})^{2N_Z} N_Z^2$ for toric codes on square lattices. (b) $F_{q,Z}/F_{q,X}$ ratio as a function of $p$ for toric codes with $2N_Z = 12$ to $2N_Z = 36$ physical qubits on square lattices. The dashed curve is the ratio between the SQL and $F_{q,X}$. (c) fitted exponent $\delta$ for toric codes on hexagonal lattices. (d) $F_{q,Z}/F_{q,X}$ as a function of $p$ for toric codes with $3N_Z = 18$ to $3N_Z = 54$ physical qubits on hexagonal lattices. For simulation, we used periodic boundary conditions and the Minimum-Weight-Perfect Matching (MWPM) decoder for decoding (see SM). The density matrix is obtained from $10^4$ error patterns.}
    \label{fig:numerics}
\end{figure}

\textbf{Toric code.} The Hamiltonian of the toric code is \cite{KITAEV20032,bravyi1998} $H=-\sum_{v}S_{X,v}-\sum_{p}S_{Z,p}$, where $S_{X,v}=\prod_{j\in v}X_{j}$ and $S_{Z,p}=\prod_{j\in p}Z_{j}$ are the vertex and plaquette stabilizers respectively. The phase imprinter is chosen as the sum of all plaquette stabilizers $O_S=\sum_{p}S_{Z,p}$. Due to the $e-m$ duality, choosing $O_{S}=\sum_{v}S_{X,v}$ would yield identical results under interchanging i.i.d. $X$ and $Z$ noises. The optimal measurement can be chosen as any $X$ patterns that share an odd overlap with each $S_{Z}$ stabilizer. In Fig.~\ref{fig:toric_code}b, we illustrate one such operator, which is defined as $T_X=\prod_{j\in\includegraphics[width=0.2cm, height=0.15cm]{Figures/horizontal_line.pdf}}X_{j}$. In Fig.~\ref{fig:toric_code}c, we show an example of incorrectly assigning a correction path that leads to a dephasing loop of $S_{Z}$.

On the square lattice, each $S_Z$ stabilizer has weight four, giving $F_{q,X}=4(1-2p)^8N_Z^2+4\left[1+4(1-2p)^6-5(1-2p)^8\right]N_Z$. For $Z$ noise, partial error correction suppresses the leading local stabilizer-sector dephasing rate to $p_{\rm eff}\sim \alpha p^\delta$, with $\delta=\lfloor(4+1)/2\rfloor=2$, so that, at small $p$ and neglecting logical error events, $F_{q,Z}\simeq 4(1-\alpha p^2)^{2N_Z}N_Z^2$; see Fig.~\ref{fig:numerics}a and the SM.

In Fig.~\ref{fig:numerics}b, we plot $F_{q,Z}/F_{q,X}$ for systems with $2N_Z=12$ to $2N_Z=36$ physical qubits, together with the physical-qubit SQL normalized by $F_{q,X}$. The saturation of $\mathrm{SQL}/F_{q,X}\to 2$ reflects that the number of physical qubits is twice the number of $S_Z$ stabilizers. The $X/Z$ trade-off is clear: increasing $N_Z$ enhances $F_{q,X}$ but suppresses $F_{q,Z}$, although $F_{q,Z}$ remains well above the uncorrected GHZ benchmark in the regime considered.

For the $F_{q,Z}$ numerics, we prepare $\ket{\text{CSS}}$, apply i.i.d.~$Z$ noise, measure the $S_{X,v}$ syndromes, perform partial error correction, and compute the QFI with respect to $\sum_p S_{Z,p}$. Error patterns yielding logical $Z_L$ errors are discarded, because these logical error events are sufficiently suppressed (see SM).

A natural question to ask is whether $\delta$ can be increased further by increasing $l$. To address this, we consider the toric code on a honeycomb lattice, where each $S_Z$ stabilizer is weight-$6$ (while each $S_X$ is weight-$3$). Choosing the phase imprinter as $O_S = \sum_p S_{Z,p}$, the QFI under i.i.d.~$X$ noise is $F_{q,X} = 4(1-2p)^{12}N_Z^2 + 4\left[1 + 6(1-2p)^{10} - 7(1-2p)^{12}\right]N_Z$, while under i.i.d.~$Z$ noise we model it via $F_{q,Z} = 4(1 - c p^3)^{2N_Z} N_Z^2$ (Fig.~\ref{fig:numerics}c). In Fig.~\ref{fig:numerics}d, we plot the ratio $F_{q,Z}/F_{q,X}$ between them for systems with $3N_Z = 18$ to $3N_Z = 54$ physical qubits, where the SQL approaches $3F_{q,X}$, reflecting the use of $3N_Z$ physical qubits.

Fig.~\ref{fig:numerics}c shows the suppression $\delta$ as a function of $3N_Z$: Since each $S_{Z,p}$ stabilizer contains $6$ qubits, effective dephasing occurs only when at least $3$ qubits dephase, leading to $\delta \to 3$ and enhancing $F_{q,Z}$ compared with the same code on square-lattices. 

Improvement comes with costs. By the $X/Z$ tradeoff, increasing $l$ unavoidably suppresses $F_{q,X}$. As shown in Fig.~\ref{fig:numerics}b and d, the QFI under i.i.d.\ $X$ noise drops below the SQL once the dashed curves exceed unity. Thus, although $F_{q,Z}$ remains above the SQL for larger $N_Z$ and $p$ when increasing $l$ from $4$ to $6$, $F_{q,X}$ falls below the SQL for smaller system size and smaller noise strength. Consequently, one cannot increase the stabilizer weight $l$ indefinitely if we want to keep both $F_{q,Z}$ and $F_{q,X}$ above the SQL. In the next example, we show that the value of $l$ to optimize $F_{q,Z}$ indeed leads to a sub-SQL $F_{q,X}$. Nevertheless, we can employ an adaptive strategy to increase $l$ while maintaining both QFIs above the SQL.

\textbf{Bacon-Shor code.} Consider the Bacon--Shor CSS subsystem code \cite{Bacon2006,Aliferis2007} on an $m\times n$ periodic lattice, with local two-qubit gauge generators $g^X_{i,j}=X_{i,j}X_{i+1,j}$ and $g^Z_{i,j}=Z_{i,j}Z_{i,j+1}$. The corresponding $X$-type stabilizers are $S^X_i=\prod_{j=1}^{n} X_{i,j}X_{i+1,j}$. Although these $X$-type stabilizers are not normally measured in the Bacon--Shor code for quantum computation, here they are instead selected as the phase-imprinting terms $O_S=\sum_i S^X_i$. We work in a highly anisotropic regime $m\gg n$, with $n$ kept small or increased only slowly with $m$. Hence, each imprinter term $S^X_i$ is supported on a local $2\times n$ strip of weight $2n$, rather than on a system-spanning operator. Partial error correction is then performed by measuring the local $Z$-type gauge checks $g^Z_{i,j}$, while the optimal measurement is chosen as the $Z$-type operator $T_Z=\prod_i Z_{2i-1,1}$. Since the phase imprinter is now built from $X$-type stabilizers, the roles of $X$ and $Z$ are exchanged relative to the previous examples. This yields $F_{q,Z}=4(1-2p)^{4n}m^2+4\left[1+2(1-2p)^{2n}-3(1-2p)^{4n}\right]m$ and $F_{q,X}=4\left[1-2\sum_{k=0}^{\lfloor\frac{n-1}{2}\rfloor}\binom{n}{k}p^{k}(1-p)^{n-k}\right]^{2m}m^2$ \footnote{This is a harsh estimate of the $F_{q,X}$ following the argument for repetition codes in \cite{W2014}, as the Bacon-Shor code doubles the weight of imprinter $2n$ and dephasing on any single row/column does not change the QFI. Nevertheless, we show that, even for this harsh estimation, it is possible to keep both $F_{q,X}$ and $F_{q,Z}$ above the SQL by arranging the shape $(m,n)$.}. Note that since $O_S$ is chosen as the sum of all $S_{X}$ stabilizers, the i.i.d.~$X$ noise is now \textit{parallel} and the i.i.d.~$Z$ noise \textit{perpendicular}. 

The $X/Z$ trade-off is evident. To suppress the parallel noise, $n$ has to be increased, which reduces the QFI under the perpendicular noise. The simplest way to see this is by taking $n\rightarrow\infty$, we recover $F_{q,Z}=4m$ that is exactly the SQL and $F_{q,X}=4m^2$ that is exactly the HL. Nevertheless, we note they are defined with respect to $m$, which is the number of $S_{X}$ stabilizers. In the Bacon-Shor code, this number is much less than the number of $g^Z_{i,j}$ checks or that of the physical qubits.

\begin{figure}[ht]
    \centering
    \includegraphics[scale = 0.75]{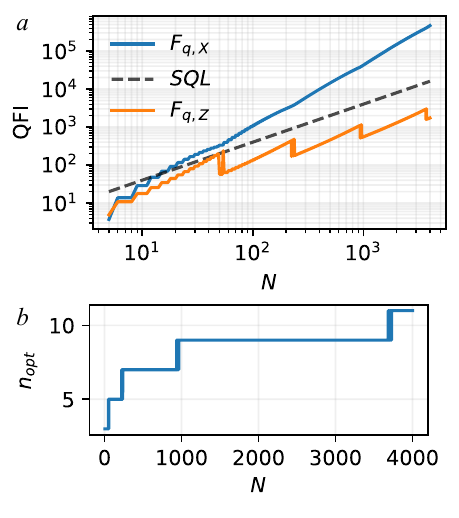}
    \vspace{-6mm}
    \caption{$X/Z$ trade-off. (a) optimal quantum Fisher information $F_{q,X}$ under i.i.d.~$X$ noise and the corresponding $F_{q,Z}$ under i.i.d.~$Z$ noise. Here, the $X$ noise is the parallel one. (b) optimal stabilizer weight $n_{opt} = l_{opt}/2$ as a function of physical qubits at $p=0.08$.}
    \label{fig:fqx_fqz_vs_N}
\end{figure}

In Fig.~\ref{fig:fqx_fqz_vs_N}a, we plot the maximal QFI as a function of the number of physical qubits, $N = nm$, under i.i.d.~$X$ noise ($F_{q,X}$) and its counterpart under i.i.d.~$Z$ noise ($F_{q,Z}$). Although increasing $N$ helps to preserve the enhanced scaling of $F_{q,X}$, the perpendicular noise ultimately dominates, leading to scaling that drops below the SQL. In Fig.~\ref{fig:fqx_fqz_vs_N}b, we plot the optimal stabilizer weight $n_{\mathrm{opt}}$ as a function of $N$, which follows the logarithmic behavior predicted in Ref.~\cite{W2014}. This $X/Z$ trade-off implies that the stabilizer weight should not be increased indefinitely under partial error correction.

\begin{figure}
    \centering
    \includegraphics[width=0.976\linewidth]{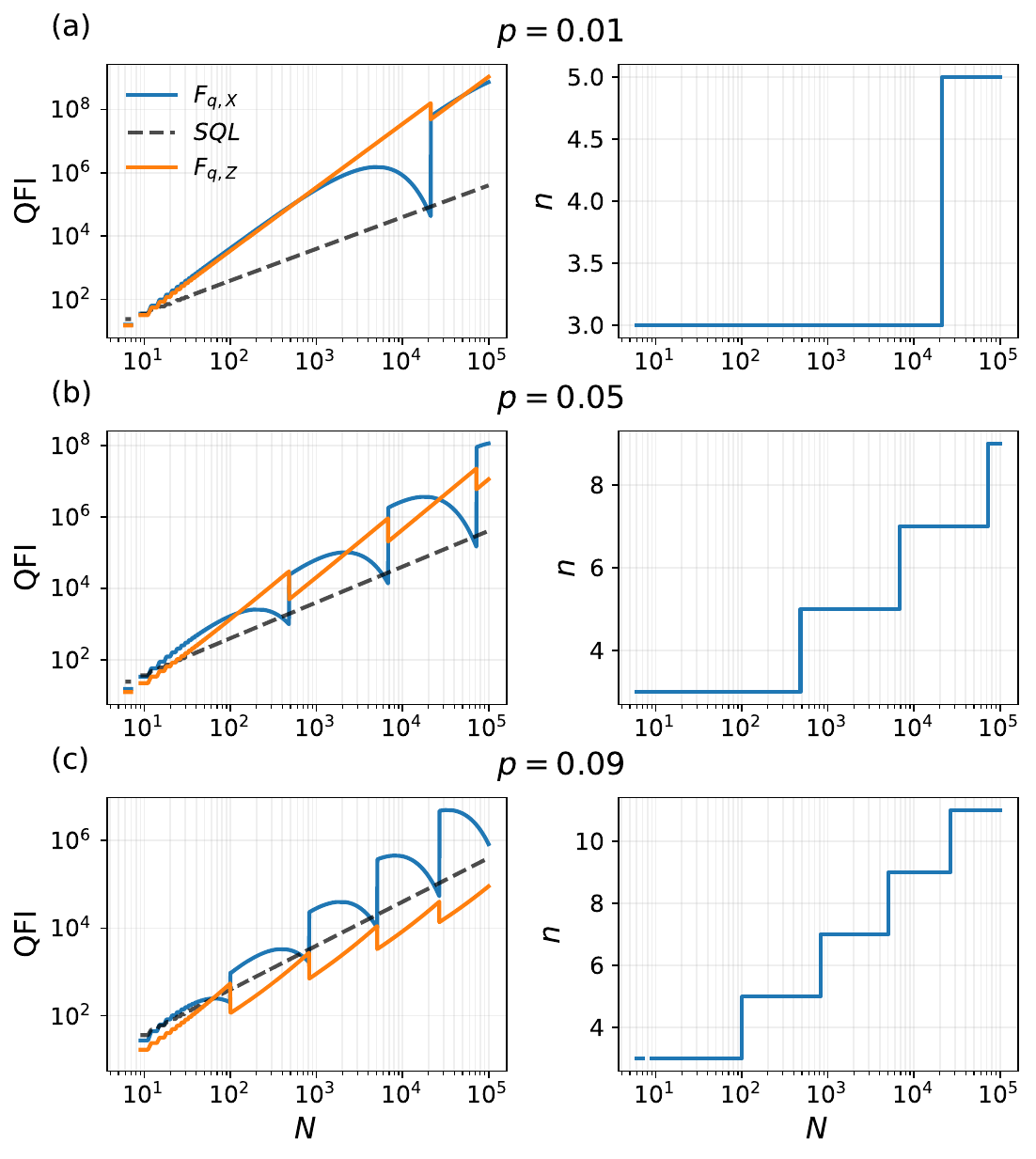}
    \vspace{-5mm}
    \caption{Optimization of the QFI (left panels) for the Bacon-Shor code $(m,n)$ via an adaptive stabilizer-weight increasing scheme, (a)--(c). The parameter $n = l/2$ (right panels) is adaptively increased by $2$ once $F_{q,X}$ hits SQL. For small $p$, this adaptive strategy keeps both $F_{q,X}$ and $F_{q,Z}$ above SQL. For $p>0.09$, scaling advantage in $F_{q,Z}$ is lost.}
    \label{fig:adaptive QFI}
\end{figure}

We can, however, apply an adaptive and balanced growth of the stabilizer weight such that both $F_{q,X}$ and $F_{q,Z}$ remain above the SQL. For example, Fig.~4 shows this procedure for increasing values of the noise strength $p$. Starting from $n=3$, we keep the stabilizer weight unchanged unless $F_{q,X}$ drops to the SQL. After this, we increase $n$ by $2$ which yields jumps in QFI. Although such a slow increase in $n$ reduces $F_{q,X}$ compared to the optimal in Fig.~\ref{fig:fqx_fqz_vs_N}, it enables super-SQL scalings in both $F_{q,X}$ and $F_{q,Z}$ (provided that $p<0.09$ where $F_{q,Z}$ drops below the SQL similar to Fig.~\ref{fig:fqx_fqz_vs_N}), making the Bacon-Shor code robust to both the parallel and the perpendicular noise. We emphasize that this scaling advantage is not limited to small $pN$, but persists even for $N = 10^7$ with $pN \gg 1$.

More generally, as explained in the SM, the QFI under a mixture of i.i.d.~$X$ and i.i.d.~$Z$ noise is the multiplicative average of $F_{q,X}$ and $F_{q,Z}$ as $F_{q,XZ}={F_{q,X}F_{q,Z}}/{4N_Z^2}$. We thus expect QEC to keep the QFI subject to the mixed $X,Z$ noise above the SQL for $\nu^X+\nu^Z>3$ where $F_{q,X}\sim N^{\nu^X}$ and $F_{q,Z}\sim N^{\nu^Z}$ with $\nu^X,\nu^Z>1$. When this asymptotic condition is not satisfied, a super-SQL QFI can still be maintained provided the noise strength remains below the finite-noise threshold analyzed in the SM.

This adaptive strategy also applies to the toric code protocol. Increasing $l$ corresponds to arranging qubits into different lattices. Here we present the cases of $l=4$ and $l=6$, and this construction can be extended to larger $l$, e.g. the square-octagonal lattice with $l=8$ \cite{PhysRevB.108.035148}. Alternatively, one may consider other CSS codes, such as 2D color codes \cite{Bombin2006,Kesselring2018,lacroix_scaling_2025}, which gives $l=6$ while requiring fewer physical qubits, leading to a more efficient protocol.

\noindent\textit{\textbf{{Noise~after~phase~imprinting---}}} We now discuss noise occurring after phase imprinting. In this case, we instead need to protect the evolved probe state $\ket{\text{CSS}}\rightarrow\ket{\text{CSS}_{\theta}}=\frac{1}{\sqrt{2}}\left[e^{iN_Z\theta}\ket{S_Z=1\cdots1}+e^{-iN_Z\theta}\ket{S_Z=-1\cdots-1}\right]$, which leaves the $+1$ eigenspace of $ T_X$ in general. However, since the stabilizers commute, we can also perform error correction using only the $S_X$ stabilizers without washing out the phase information. In this case, the QFI becomes significantly more robust against perpendicular noise, based on the following observation: The QFI $F_q=4N_Z^2$ is intact under an error channel $\prod_jE_j$ if its dual channel has trivial action on the detection operator $T_X$ i.e. $\prod_{j}E_j^\dagger(T_X)=T_X$. The proof is straightforward by noting that $\text{Tr}\left[\prod_{j}E_{j}(\rho_\theta)T_X\right]=\text{Tr}\left[\rho_{\theta}\prod_{j}E_{j}^\dagger(T_X)\right]=\text{Tr}\left[\rho_\theta T_X\right]$. Identifying $\rho_\theta=\ket{\text{CSS}_\theta}\bra{\text{CSS}_\theta}$, we restore $\text{Tr}\left[\prod_{j}E_{j}(\rho_\theta)T_X\right]=\cos(2N_Z\theta)$. For parallel noise, the situation is the same as before because the phase imprinter commutes with the error channel. Moreover, we can lower bound the QFI for arbitrary i.i.d.~Pauli errors
\begin{theorem}
    The QFI of the probe state $\rho_\theta$ under i.i.d. Pauli errors described by $E_P[\rho_\theta]=(1-p_x-p_y-p_z)\rho_\theta+p_x X\rho_\theta X +p_y Y \rho_\theta Y +p_z Z \rho_\theta Z$ $+$ partial error correction is lower bounded by the QFI under the i.i.d. dephasing channel $\mathcal{E}_{z}[\rho_\theta]=(1-p)\rho_{\theta}+pZ\rho_{\theta}Z$+ partial error correction with the identification $p=p_y+p_z$.
\end{theorem}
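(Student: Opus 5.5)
Since $Y=iXZ$, we can rewrite the single-qubit Pauli channel as a composition of an $X$-type and a $Z$-type channel plus a correction term. The cleanest route avoids splitting off a pure $X$ channel at all. Instead we use the monotonicity of the QFI under CPTP maps (data processing), together with the observation from the previous paragraph: any channel whose dual fixes $T_X$ leaves the relevant signal intact.

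\textbf{Step 1: factor the channel.} Write the single-qubit Pauli channel as $E_P=\mathcal{E}_x\circ\mathcal{E}_z$ up to a residual. Composing $\mathcal{E}_x[\rho]=(1-q)\rho+qX\rho X$ with $\mathcal{E}_z[\rho]=(1-p)\rho+pZ\rho Z$ gives weights $(1-q)(1-p)$, $q(1-p)$, $qp$, $(1-q)p$ on $I,X,Y,Z$. These match $E_P$ only when $p_xp_z=p_y(1-p_x-p_y-p_z)$, so an exact factorization fails in general.

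\textbf{Step 2: use a flagged representation instead.} Introduce a classical flag register and define the Kraus decomposition
$$E_P[\rho]=\sum_{a\in\{0,1\}} \mathcal{F}_a\big(\mathcal{E}_z^{(a)}[\rho]\big).$$
Here $a=1$ marks that a $Z$-type component ($Y$ or $Z$) occurred, which happens with total probability $p=p_y+p_z$. In that branch we apply $Z$ and then, with conditional probability $p_y/p$, an additional $X$. In the $a=0$ branch we apply $X$ with conditional probability $p_x/(1-p)$. Equivalently,
$$E_P=\mathcal{R}\circ\mathcal{E}_z^{\rm flag},$$
where $\mathcal{E}_z^{\rm flag}$ is the dephasing channel with strength $p$ that also records the flag, and $\mathcal{R}$ applies a flag-conditioned $X$ flip followed by discarding the flag. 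On the full product over qubits, $\prod_jE_{P,j}=\mathcal{R}^{\otimes}\circ\prod_j\mathcal{E}^{\rm flag}_{z,j}$.

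\textbf{Step 3: commute the $X$ part through partial error correction.} The partial error correction measures only $S_X$ syndromes and applies $Z$-type recoveries. An $X$ error commutes with every $S_X$, so it does not alter the syndrome, and hence the recovery, chosen on the $Z$-part. Up to signs, which cancel in the conjugation, $X$ and $Z$ Pauli products also commute as channels. Therefore
$$\mathcal{C}\circ\prod_jE_{P,j}=\mathcal{R}^{\otimes}\circ\big(\mathcal{C}\circ\prod_j\mathcal{E}^{\rm flag}_{z,j}\big),$$
where $\mathcal{C}$ is the syndrome-measure-and-recover map. The flag is classical and independent of $\theta$, so keeping it cannot decrease the QFI. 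Moreover, conditioned on the flag the $Z$-part is exactly the dephasing pattern distribution of $\mathcal{E}_z$ with $p=p_y+p_z$. Discarding the flag therefore gives a state whose QFI equals that of $\mathcal{C}\circ\mathcal{E}_z^{\otimes}[\rho_\theta]$: the flag carries no $\theta$ information beyond what is already in the branch mixture, and the branch labels are orthogonal.

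\textbf{Step 4: conclude by data processing.} The map $\mathcal{R}^{\otimes}$ (flag-controlled $X$ flips, then tracing out the flag) is CPTP and $\theta$-independent. Monotonicity of the QFI gives
$$F_q\big[\mathcal{C}\circ\textstyle\prod_jE_{P,j}(\rho_\theta)\big]\le F_q\big[\text{flagged state}\big].$$
That inequality points the wrong way, so the argument has to be reversed: we want the Pauli QFI to be \emph{at least} the dephasing QFI.

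This is where the key physics enters, and it is the main obstacle. A pure $X$ flip does not reduce the information extractable by the optimal measurement $T_X$, because $\mathcal{E}_x^\dagger(T_X)=T_X$ as shown above. We would therefore bound the QFI from below by the classical Fisher information of measuring $T_X$, i.e. the error propagation formula. For $\mathcal{C}\circ\mathcal{E}_z$ this measurement saturates the QFI, since the output remains a mixture of $\ket{\mathrm{CSS}_\theta}$ with sign-flipped coherences. For the Pauli case, $\langle T_X\rangle$ and its variance are unchanged by $\mathcal{R}$, because $T_X$ commutes with all $X$ operators and $T_X^2=I$. Hence
$$F_q[\text{Pauli}]\ge\frac{|\partial_\theta\langle T_X\rangle|^2}{1-\langle T_X\rangle^2}=F_q[\mathcal{C}\circ\mathcal{E}_z].$$
The delicate points are justifying that $T_X$ is optimal for the dephased-and-corrected state (we rely on the SM's optimality discussion) and handling the $\theta$ values where $\langle T_X\rangle^2\to1$ via a limiting argument.
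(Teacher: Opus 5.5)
Your final argument is the paper's argument. The paper shows that $\langle T_X\rangle_\theta$ and $\langle T_X^2\rangle_\theta=1$ are identical for $E_P$ plus correction and for $\mathcal{E}_z$ with $p=p_y+p_z$ plus correction. This holds because $Y$ and $Z$ look the same to the $S_X$ syndromes and to $T_X$, while $X$ components are invisible to both. The error-propagation formula then gives the bound.

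You can drop the flag construction and the data-processing step. The Step 4 inequality is vacuous: conditioned on the full flag pattern the corrected state is pure, so the flagged state has QFI $4N_Z^2$. From Steps 2--3 you only use two facts. First, the $Z$-marginal of the Pauli pattern is i.i.d.\ with rate $p_y+p_z$. Second, the $X$-part passes through syndrome extraction and $Z$-type recovery. That is exactly the paper's two-line reduction.

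The step you flag as delicate is misdiagnosed, and the justification you propose would fail. For the dephased-and-corrected state, $\langle T_X\rangle_\theta=(1-2p_{odd})\cos(2N_Z\theta)$, so
\[
\frac{|\partial_\theta\langle T_X\rangle_\theta|^2}{1-\langle T_X\rangle_\theta^2}=\frac{4N_Z^2(1-2p_{odd})^2\sin^2(2N_Z\theta)}{1-(1-2p_{odd})^2\cos^2(2N_Z\theta)}.
\]
For $0<p_{odd}<1/2$, this equals the QFI $4(1-2p_{odd})^2N_Z^2$ only when $\cos(2N_Z\theta)=0$. The denominator is at least $1-(1-2p_{odd})^2>0$, so there is no $\langle T_X\rangle^2\to1$ singularity and no limit to take. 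The SM's optimality argument does not transfer either: it is a $\theta\to0$ expansion that assumes the strong symmetry $T_X\rho=t_x\rho$, which dephasing breaks. Near $\theta=0$ the ratio actually tends to zero.

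You must evaluate at the working point $2N_Z\theta\equiv\pi/2\pmod{\pi}$, where the ratio is exactly $4(1-2p_{odd})^2N_Z^2$. This restriction is essential, not cosmetic. The dephasing QFI is $\theta$-independent because that channel commutes with $U(\theta)$, but the Pauli QFI is not. At generic $\theta$ the $X$ components push it strictly below the dephasing value.

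A concrete check is the bare GHZ state with independent $X$ and $Z$ flips. Each block $\{|x\rangle,|\bar x\rangle\}$ has Bloch vector $r(\cos 2N\theta,\,c\sin 2N\theta,\,0)$ with $r=(1-2p)^N$ and $|c|<1$. Its QFI is strictly below $4N^2r^2$ unless $\cos 2N\theta=0$.

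So the theorem holds at that working point, and hence for the QFI maximized over $\theta$. The paper's inequality $F_q\ge4(1-2p_{odd})^2N_Z^2$ tacitly assumes this as well, so you should state the choice of $\theta$ explicitly rather than appeal to the SM.
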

We prove this theorem in the SM. We then conclude that the perpendicular error after the phase imprinter is benign to the QFI, and the sensing performance can always be improved by increasing the stabilizer weight to suppress the parallel noise.

\noindent\textit{\textbf{Conclusion---}} In this Letter, we introduced a new scheme for error-corrected quantum metrology built on the idea of \textit{partial} quantum error correction. We analyzed the tradeoff of noise suppression. For parallel noise with strength $p$, we achieved $p^\delta$ suppression with $\delta >1$. We proposed an adaptive strategy to maintain beyond classical performance as we scale up the system. Additional remarks are presented in the SM.

We thank Sara Murciano and Pablo Sala for insightful discussions. Z.W. is supported by the National Natural Science Foundation of China (Grant No.~12474491) and the Central University Fundamental Research Funds (Peking University). S.Z. acknowledges support from National Research Council of Canada (Grant No.~AQC-217-1) and Perimeter Institute for Theoretical Physics, a research institute supported in part by the Government of Canada through the Department of Innovation, Science and Economic Development Canada and by the Province of Ontario through the Ministry of Colleges and Universities.

\bibliography{reference}

\newpage
\onecolumngrid
\begin{center}
\Large{\bf Supplemental Materials}
\end{center}
\setcounter{section}{0}
\setcounter{theorem}{0}
\setcounter{page}{1}
\renewcommand{\thepage}{S\arabic{page}}

\setcounter{figure}{0}
\renewcommand{\thefigure}{S\arabic{figure}}

\setcounter{table}{0}
\renewcommand{\thetable}{S\arabic{table}}

\setcounter{equation}{0}
\renewcommand{\theequation}{S\arabic{equation}}

\section{Remarks}\label{Sec: Supplemental Materials}

\noindent\textit{\textbf{Remark 1.}} In this work, we introduced a sensing method that utilizes only partial QEC on the probe states constructed from CSS codes to achieve enough noise suppression to retain the probe states' super-SQL sensing performance. The main results are summarized as follows.

\begin{itemize}
    \item \textbf{Analytical expression for QFI, $F_{q,X}$, under perpendicular noise.} \\
    Building on the symmetry argument in Ref.~\cite{chen2026}, we analytically computed the QFI under perpendicular noise that (strongly) preserves the symmetry of the probe state, with an explicit and exact dependence on the stabilizer weight $l$. Although $F_{q,X}$ always exhibits super-SQL scaling with respect to the number of stabilizers, it falls below the SQL when expressed in terms of the number of physical qubits for sufficiently large $l$. This means that we cannot increase $l$ indefinitely to suppress parallel noise.
    \item \textbf{Improvement for QFI, $F_{q,Z}$, under parallel noise.} \\
    In contrast to the perpendicular case, increasing the stabilizer weight $l$ always enhances $F_{q,Z}$, since dephasing becomes uncorrectable only when at least half of the qubits in a stabilizer are affected. This leads to a polynomial suppression of the noise strength, $p \to \alpha p^{\delta}$, with $\delta>1$. This suppression arises from QEC and is demonstrated numerically for three CSS codes: the toric code on square and honeycomb lattices and the $(m,n)$ Bacon--Shor code. Following Ref.~\cite{W2014}, we recover super-SQL scaling with respect to the number of physical qubits, provided that the stabilizer weight increases logarithmically, $l \sim \log N$. However, such an increase in $l$ may lead to sub-SQL scaling in $F_{q,X}$, reflecting a trade-off between robustness to parallel and perpendicular noise.
    \item \textbf{An adaptive method to maintain super-SQL performance under both noise types.} \\
    Building on the $(m,n)$ Bacon--Shor code, we numerically showed that a more refined scaling of $l(N)$ can balance the performance of $F_{q,X}$ and $F_{q,Z}$. In particular, increasing the stabilizer weight by $2$ only when the parallel-noise QFI hits the SQL allows both $F_{q,X}$ and $F_{q,Z}$ to remain above the SQL for small noise strength $p < 0.09$. This threshold is independent of $N$, and the enhanced scaling persists up to $N = 10^7$, where $pN \gg 1$. Furthermore, we prove rigorously that the QFI under a general mixture of $X$ and $Z$ noise is the multiplicative average of $F_{q,X}$ and $F_{q,Z}$: $F_{q,XZ}=\frac{F_{q,X}F_{q,Z}}{4N_Z^2}$, which implies partial error correction is able to preserve the QFI above the SQL under a mixture of perpendicular and parallel noise.
    \item \textbf{Noise suppression after phase imprinting} \\
    We also considered the case where errors occur after phase imprinting, and showed that increasing the stabilizer weight $l$ suppresses general i.i.d.\ Pauli errors. Although we focused on probe states constructed from CSS codes, our approach applies more broadly to other metrologically useful states. For example, one could encode the Dicke states \cite{PhysRevLett.107.080504,Apellaniz_2015}. Furthermore, since a gradual increase in $l$---even slower than logarithmic growth in $N$---is sufficient for robustness (see Fig.~4 in the Letter), the phase imprinter can be viewed as a sum of \emph{local} operators, making it potentially feasible on platforms that support both QEC and local parameter encoding, such as superconducting-qubit architectures \cite{acharya_quantum_2025,lacroix_scaling_2025} and neutral atom platforms \cite{bluvstein_fault-tolerant_2026,reichardt2025faulttolerantquantumcomputationneutral}. This suggests that local codes developed for quantum computation can also be advantageous in metrological settings.
\end{itemize}

\noindent\textit{\textbf{Remark 2.}} We note that not all partial QEC protocols beat the bare GHZ protocol. For example, consider the Bacon-Shor code whose Hamiltonian is given in the main text. If we choose the phase imprinter to be $O_{S}=\sum_{i,j}g^{Z}_{i,j}$, the optimal measurement to be $T_X=\prod_{i,j}X_{i,2j-1}$, and choose the $S_{X,i}$ stabilizers for performing error correction, then this choice of operators gives $l=2$, leading to $F_{q,X}=4(1-2p)^4 N^2+4\left[1+2(1-2p)^2-3(1-2p)^4\right]N$ and $F_{q,Z}=4(1-2cp)^{2N}N^2$. In terms of scaling, this protocol is even worse than the bare GHZ protocol.

\noindent\textit{\textbf{Remark 3.}} One might be concerned that although in all our protocols we have picked all the checks / stabilizers and all the phase imprinter to be local operators, the optimal measurement / symmetry operator is evidently non-local with a support spanning over the entire system. And, a measurement of such operator is generally not fault-tolerant. This is, indeed, used to be very concerning. However, recent development in fault-tolerant quantum computation has provided us a fault-tolerant way to measure such a non-local operator via local operators. This method is called ``gauging logical operators'' \cite{williamson_low-overhead_2026}. Similar to lattice surgery, the basic idea is, by introducing an ancilla system, we can turn a measurement of a non-local operator on our quantum code into a joint local measurement of operators involving both the qubits of our quantum code and those of the ancilla system. We leave the modified partial error-correction protocol involving ``gauging logical operators'' to a future work.

\noindent\textit{\textbf{Remark 4.}} We name a few more points that are beyond the scope of our work. Since our purpose is to introduce the partial QEC scheme, we have limited the analysis of our protocols in the simplest settings. First, we only considered errors on the physical qubits. A full error analysis that includes other types of circuit-level noise, such as noise from imperfect measurement, faulty gate implementation during syndrome extraction, or qubit loss (as seen, e.g. in Refs.~\cite{chen2026,sahu2026achievingheisenberglimitusing}) is left to future works. Second, we have not considered the most general decoherence model other than a mixture of $X$ and $Z$ noise before the phase imprinting and a general Pauli noise after the phase imprinting. Third, our discrete treatment of noise via quantum channels may miss features of the continuously applied signal and noise \cite{Chaves2013}. Apart from these, it would also be interesting to see how our current examples generalize to multi-parameter estimation tasks, where the commutation relations of the imprinters play a crucial role in determining the Fisher information matrix \cite{Ragy2016,Demkowicz2020}.

\section{QFI under perpendicular noise}

In this section, we provide a method to compute the QFI under perpendicular noise for different codes used in the main text. We first sketch the proof of a general statement: the QFI equals $4$ times the variance of the generator of the imprinter (as in the pure state case), given the symmetry is strongly preserved. The proof can be also found in \cite{chen2026} and a closely related version in \cite{frerot2024symmetry}.

Consider a unitary phase imprinter $U(\theta)=e^{i\theta O}$ and a strong symmetry of the probe state $T_{X}\rho=t_x\rho$ that \textit{anticommutes} with the generator $O$ $\left\{T_X,O\right\}=0$. Plugging into the error propagation formula, we have $\braket{T_{X}}_\theta=\text{Tr}\left[\ U^\dagger(\theta) T_{X} U(\theta)\rho\right]
=\text{Tr}\left[\ U^{\dagger2}(\theta) T_{X}\rho\right]=t_x\text{Tr}\left[\ U^{\dagger2}(\theta)\rho\right]$. Similarly, $\braket{T_X^2}_{\theta}=\text{Tr}\left[\ U^\dagger(\theta) T_{X}^2 U(\theta)\rho\right]
=\text{Tr}\left[\ T_{X}^2\rho\right]=t_x^2$. Taylor expanding $U^\dagger(\theta)$, we obtain $\braket{T_{X}}_{\theta}
= t_x \left[1-2\theta^2\braket{O^2} + \mathcal{O}(\theta^3)\right]$, yielding $\partial_\theta\braket{T_X}_\theta=t_x\left[-4\theta\braket{O^2}+\mathcal{O}(\theta^2)\right]$. Plugging back into the error propagation formula, we have:
\begin{equation}
    \delta\theta^{-2}=\frac{\left|\partial_\theta\braket{T_X}_\theta\right|^2}{\braket{T_X^2}_\theta-\braket{T_X}_\theta^2}=\frac{t_x^2\left[-4\theta\braket{O^2}+\mathcal{O}(\theta^2)\right]^2}{t_x^2-t_x^2\left[1-2\theta^2\braket{O^2} + \mathcal{O}(\theta^3)\right]^2}=4\braket{O^2}+\mathcal{O}(\theta^3).\label{eq:QFI_is_variance}
\end{equation}
Meanwhile, $T_X$ also serves as the optimal measurement, in the sense that the QFI is \textit{always} upper-bounded by the variance of $O$. Consequently, the propagated error saturates the CRB for $\theta\rightarrow0$.

We finally remark that we do not assume $T_X$ being hermitian or $t_x$ being real. For a general unitary transform $T_X$, we refer interested readers to \cite{chen2026} for a construction of a POVM using Hadamard test, whose classical fisher information (CFI) saturates the QFI.

\begin{figure}
    \centering
    \includegraphics[width=0.6\linewidth]{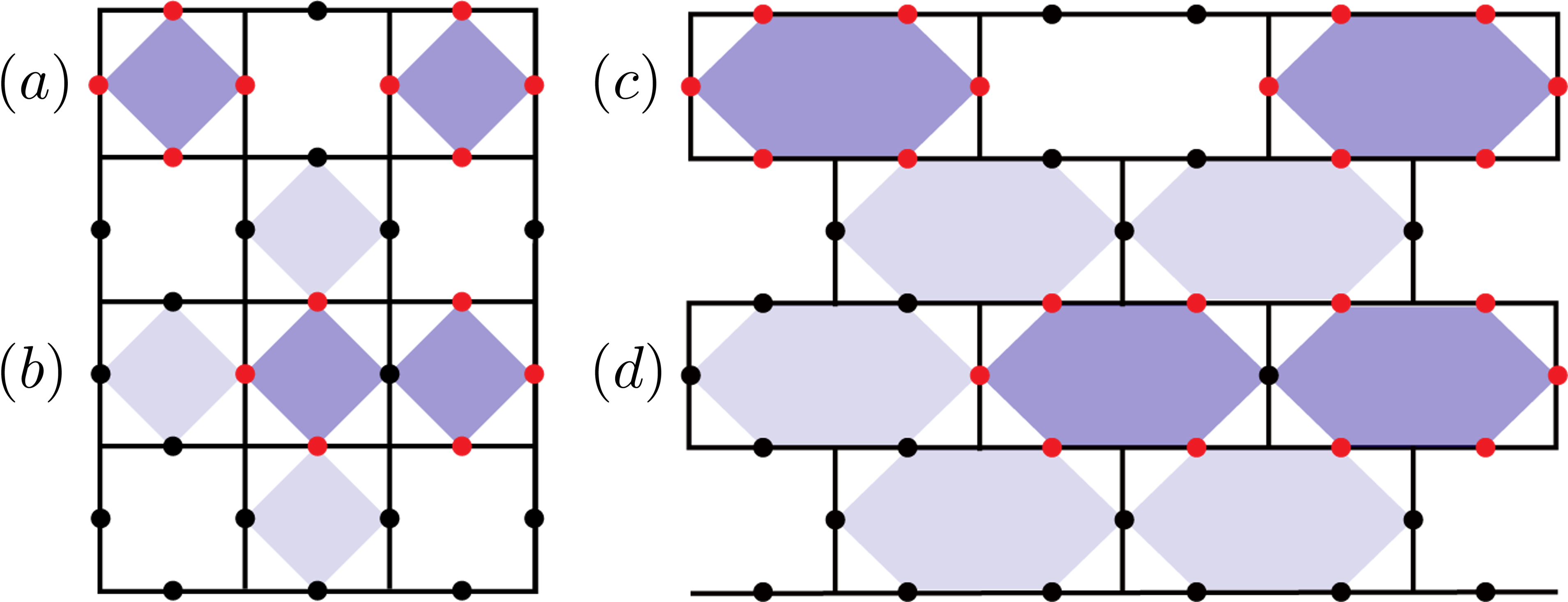}
    \caption{$F_X$ calculation for toric codes on square and honeycomb lattices. Square lattice: (a) one example of $N^2$ terms with prefactor $(1-2p)^8$; (b) one example of $N$ terms with prefactor $4(1-2p)^6$. Honeycomb lattice: (c) one example of $O(N^2)$ terms with prefactor $(1-2p)^{12}$; (d) one example of $N$ terms with prefactor $6(1-2p)^{10}$.}
    \label{fig:F_X_figure}
\end{figure}

\subsection{$F_{q,X}$ for GHZ state}

We now compute the quantum Fisher information $F_{X}$ for a GHZ with $N$ qubits. From Eq.~\eqref{eq:QFI_is_variance}, we know that it equals the variance of $O=\sum_{j}Z_{j}$. As this quantity remains linear in the density matrix $\rho$, it can be computed by the dual channel trick
\begin{equation}
\braket{O^2}_\rho=\braket{E_X^\dagger[O^2]}_{\ket{\text{GHZ}}}.
\end{equation}
The action of the dual channel $E_X^\dagger$ on $O^2=\sum_{i,j}Z_i Z_j$ can be counted by: (1) $N^2$ terms with prefactor $(1-2p)^2$, each coming from the identity $(1-p)Z_i+pX_iZ_i X_i=(1-2p)Z_i$; (2) subtraction of the $N$ over-counted $i=j$ case that gives $Z_i^2=1$. Adding up two contributions, we have
\begin{equation}
    F_{q,X}=4(1-2p)^2N^2+4\left[1-(1-2p)^2\right]N=4(1-2p)^2N^2+16\left[p(1-p)\right]N.
\end{equation}

\subsection{$F_{q,X}$ for $\ket{\text{CSS}}$ from toric codes}

We move on to the $\ket{\text{CSS}}$ constructed from toric codes on square and hexagonal lattices. We start with the square lattice where the imprinter is chosen as $O=\sum_pS_{Z,p}$, with each plaquette containing $4$ physical qubits. As $\left\{\prod\limits_{j \in \includegraphics[width=0.2cm, height=0.15cm]{Figures/horizontal_line.pdf}} X_j,O\right\}=0$ and $E_X$ strongly respects $\prod\limits_{j \in \includegraphics[width=0.2cm, height=0.15cm]{Figures/horizontal_line.pdf}} X_j$, the $F_{q,X}$ can be calculated in the same way as $\braket{O^2}=\sum_{p,p'}S_{Z,p}S_{Z,p'}$ using the dual-channel trick:
\begin{equation}
\braket{O^2}_\rho=\braket{E_X^\dagger[O^2]}_{\ket{\text{CSS}}}.
\end{equation}
Different from the GHZ state case, we now have three contributions in $\braket{O^2}$: (1) $N_Z^2$ (number of $S_Z$ stabilizers) terms with prefactor $(1-2p)^8$, coming from $8$ physical qubits on two separated $S_Z$ stabilizers (Fig.~\ref{fig:F_X_figure}a); (2) subtraction of $N_Z$ over-counted terms for $p=p'$ (here $p$ is the label for the plaquette, not the error strength), which give $S_{Z,p}^2=1$; (3) subtraction of $N_Z$ over-counted terms for adjacent $\braket{p,p'}$, which instead give a prefactor $4(1-2p)^6$ (Fig.~\ref{fig:F_X_figure}b) for $6$ physical qubits on $4$ different directions. Adding up three contributions, we have
\begin{equation}
    F_{q,X}=4(1-2p)^8N_Z^2+4\left[1+4(1-2p)^6-5(1-2p)^8\right]N_Z.
\end{equation}
For the toric code on the honeycomb lattice, a parallel geometric counting can be given as: (1) $N_Z^2$ terms with prefactor $(1-2p)^{12}$, coming from $12$ physical qubits on separated $S_{Z,p}$ hexagons; (2) subtraction of $N_Z$ over-counted terms for $p=p'$ and (3) subtraction of $N_Z$ over-counted terms for adjacent $\braket{p,p'}$, which give a prefactor $6(1-2p)^{10}$ for $10$ physical qubits on $6$ different directions. Adding up three terms, we have
\begin{equation}
    F_{q,X}=4(1-2p)^{12}N_Z^2+4\left[1+6(1-2p)^{10}-7(1-2p)^{12}\right]N_Z.    
\end{equation}

\section{QFI Under parallel noise and decoding for partial error correction}

\begin{figure}
    \centering
    \includegraphics[width=1\linewidth]{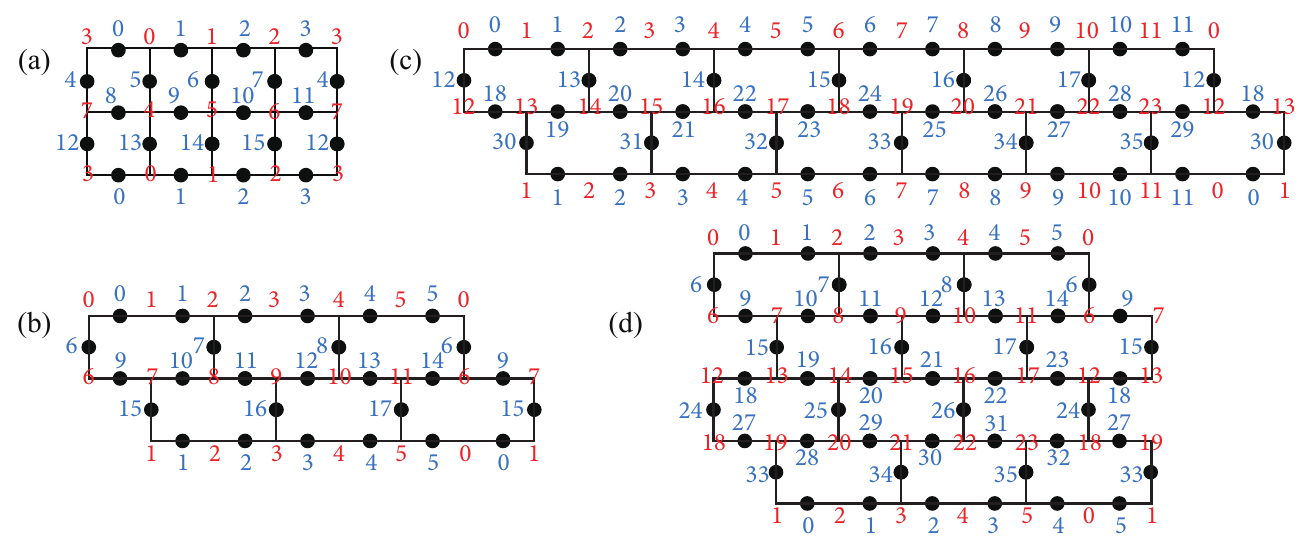}
    \caption{Decoding examples for toric codes on square (a) and honeycomb (b)-(d) lattices with different lattice sizes and geometries. Different qubits are labeled by \textcolor{myblue}{blue} numbers and vertices by \textcolor{myred}{red} numbers. For (c) and (d) we put the same amount of qubits on different lattice shapes.}
    \label{fig:decoding}
\end{figure}

In this section, we provide analytical and numerical details in obtaining the $F_{q,Z}$ for various codes. We provide the details on how to conduct partial error correction for toric code construction.

Unlike the perpendicular noise, the parallel noise inevitably breaks the strong symmetry $T_X$. Specifically, for decoherence channel whose jump operator given by the phase imprinter,  the no-go theorem states that the asymptotic scaling in $N$ is at most linear \cite{demkowicz-dobrzanski_elusive_2012,Zhou2018}. Nevertheless, we can still find possible advantages in polynomial suppression in the noise strength when the lattice size is finite, as this is also the case where a practical sensing task is conducted. 

Let us begin with the GHZ state, then to the toric codes and Bacon-Shor codes.

\subsection{$F_{q,Z}$ for GHZ state}

Though it is well known that the GHZ state has an exponentially decaying QFI w.r.t. $N$. We find the derivation of it useful for later generalization. The GHZ state is a $+1$ eigenstate of the global $\mathbb{Z}_2$ symmetry $T_X=\prod_jX_j$. The i.i.d. $Z$ (dephasing) channel breaks such strong symmetry: Consider a single site dephasing $E_{Z,i}[\ket{\text{GHZ}}\bra{\text{GHZ}}]=(1-p)\ket{\text{GHZ}}\bra{\text{GHZ}}+pZ_i\ket{\text{GHZ}}\bra{\text{GHZ}} Z_i$, the dephased GHZ state has probability $P(+1)=1-p$ of being in the $+1$ eigenspace of $T_X$ and $P(-1)=p$ of being in the $-1$ eigenspace. The QFI $F_{q,Z}$ under such channel is then calculated as \cite{W2014}
\begin{equation}
    F_{q,Z}=4\left[P(+)-P(-)\right]^2N^2.\label{eq:FZ_GHZ}
\end{equation}
We note that this formula can be explicitly derived from the definition of the QFI $2\sum_{i,j}\frac{\left(\lambda_{i}-\lambda_j\right)^2}{\lambda_i+\lambda_j}\left|\braket{\lambda_i|\sum_lZ_l|\lambda_j}\right|^2$, as the only two candidates are $\frac{1}{\sqrt{2}}\left[\ket{1\cdots1}\pm\ket{-1\cdots-1}\right]$ with probability $\lambda_{+}=1-p,\ \lambda_{-}=p$. For the i.i.d. $Z$ noise, the only modification is $P(+)-P(-)=(1-2p)^{N}$. Substituting to Eq.~\eqref{eq:FZ_GHZ}, we have 
\begin{equation}
    F_{q,Z}=4(1-2p)^{2N}N^2.
\end{equation}

\subsection{$F_{q,Z}$ for $\ket{\text{CSS}}$ from toric codes}

\begin{figure}[t]
    \centering
    \includegraphics[width=0.5\columnwidth]{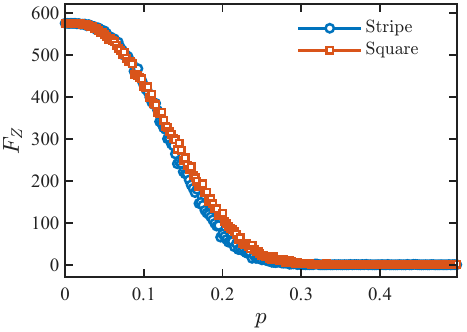}
    \caption{Quantum Fisher information $F_{q,Z}$ as a function of the parameter $p$ for two lattice configurations: stripe (Fig.~\ref{fig:decoding}c, circles, solid line) and square (Fig.~\ref{fig:decoding}d, squares, solid line). The data are obtained for a system of size $N=36$. The plot shows that different choices of finite-size lattices does not significantly change the behavior of $F_{q,Z}$.}
    \label{fig:QFI_vs_p}
\end{figure}

We now apply the above observation to the CSS probe state $\ket{\text{CSS}}=\frac{1}{\sqrt{2}}\Big(\ket{S_{Z}=1\cdots1}+\ket{S_{Z}=-1\cdots-1}\Big)$. Consider the generalized dephasing channel
\begin{equation}
    E_{S_Z}[\rho]=(1-p)\rho+pS_{Z}\rho S_{Z},
\end{equation}
the resultant is again a mixture of $\frac{1}{\sqrt{2}}\Big(\ket{S_{Z}=1\cdots1}\pm\ket{S_{Z}=-1\cdots-1}\Big)$ with probability $P(+1)=1-p$ and $P(-1)=p$. Further applying the i.i.d. $S_Z$ noise, two probabilities satisfy $P(+1)-P(-1)=(1-2p)^{N_Z}$, with $N_Z$ the number of $S_Z$ stabilizers. Plugging into the defining equation of QFI, we obtain 
\begin{equation}
    F_{q,Z}=4(1-2p)^{2N_Z}N_Z^2.
\end{equation}
Here, we assume that the dephasing channel being exactly the form in $E_{S_Z}$. If starting from the original i.i.d. $Z$ noise, one should not directly apply the above formula. However, in the absence of logical error (even with logical error, one may still be able to derive the exact generalization of Eq.~\eqref{eq:FZ_GHZ}, as there are only finitely many mixtures in the coding space), Eq.~\eqref{eq:FZ_GHZ} still applies as the state always remains a mixture of $\frac{1}{\sqrt{2}}\Big(\ket{S_{Z}=1\cdots1}\pm\ket{S_{Z}=-1\cdots-1}\Big)$ and the only modification is on $P(+)-P(-)$, which is determined by the error-correction and choice of decoding scheme. As stated in the main text, for a $S_Z$ stabilizer with weight $l$, the least amount of $Z$ error is $\frac{\lfloor l+1\rfloor}{2}$, for which no decoding scheme is able to distinguish. We therefore expect that, when applying the i.i.d. $Z$ noise and correction, $P(+)-P(-)$ is effectively described by $(1-p_{eff})^{2N_Z}$ with $p_{eff}=\alpha p^{\frac{\lfloor l+1\rfloor}{2}}$. Compared with the exact $S_{Z}$ dephasing channel, this formula is only approximate and we show by numerics to verify this approximation.

\begin{table*}[t]
\centering
\tiny
\setlength{\tabcolsep}{1.2pt}

\begin{minipage}{0.32\textwidth}
\centering
\begin{tabular}{c|c|c}
\hline
\textbf{Vertices} & \textbf{Adjacency} & \textbf{Path} \\
\hline
(\textcolor{myred}{0},\textcolor{myred}{1}) & \textcolor{myblue}{0,1,5,13}/\textcolor{myblue}{1,2,6,14} & \textcolor{myblue}{1} \\
(\textcolor{myred}{0},\textcolor{myred}{2}) & \textcolor{myblue}{0,1,5,13}/\textcolor{myblue}{2,3,7,15} & \textcolor{myblue}{0,3} \\
(\textcolor{myred}{0},\textcolor{myred}{3}) & \textcolor{myblue}{0,1,5,13}/\textcolor{myblue}{3,0,4,12} & \textcolor{myblue}{0} \\
(\textcolor{myred}{0},\textcolor{myred}{4}) & \textcolor{myblue}{0,1,5,13}/\textcolor{myblue}{8,9,5,13} & \textcolor{myblue}{5} \\
(\textcolor{myred}{0},\textcolor{myred}{5}) & \textcolor{myblue}{0,1,5,13}/\textcolor{myblue}{9,10,6,14} & \textcolor{myblue}{1,6} \\
(\textcolor{myred}{0},\textcolor{myred}{6}) & \textcolor{myblue}{0,1,5,13}/\textcolor{myblue}{10,11,7,15} & \textcolor{myblue}{0,3,7} \\
(\textcolor{myred}{0},\textcolor{myred}{7}) & \textcolor{myblue}{0,1,5,13}/\textcolor{myblue}{11,8,4,12} & \textcolor{myblue}{0,4} \\
\hline
(\textcolor{myred}{1},\textcolor{myred}{2}) & \textcolor{myblue}{1,2,6,14}/\textcolor{myblue}{2,3,7,15} & \textcolor{myblue}{2} \\
(\textcolor{myred}{1},\textcolor{myred}{3}) & \textcolor{myblue}{1,2,6,14}/\textcolor{myblue}{3,0,4,12} & \textcolor{myblue}{1,0} \\
(\textcolor{myred}{1},\textcolor{myred}{4}) & \textcolor{myblue}{1,2,6,14}/\textcolor{myblue}{8,9,5,13} & \textcolor{myblue}{1,5} \\
(\textcolor{myred}{1},\textcolor{myred}{5}) & \textcolor{myblue}{1,2,6,14}/\textcolor{myblue}{9,10,6,14} & \textcolor{myblue}{6} \\
(\textcolor{myred}{1},\textcolor{myred}{6}) & \textcolor{myblue}{1,2,6,14}/\textcolor{myblue}{10,11,7,15} & \textcolor{myblue}{2,7} \\
(\textcolor{myred}{1},\textcolor{myred}{7}) & \textcolor{myblue}{1,2,6,14}/\textcolor{myblue}{11,8,4,12} & \textcolor{myblue}{1,0,4} \\
\hline
(\textcolor{myred}{2},\textcolor{myred}{3}) & \textcolor{myblue}{2,3,7,15}/\textcolor{myblue}{3,0,4,12} & \textcolor{myblue}{3} \\
(\textcolor{myred}{2},\textcolor{myred}{4}) & \textcolor{myblue}{2,3,7,15}/\textcolor{myblue}{8,9,5,13} & \textcolor{myblue}{2,1,5} \\
(\textcolor{myred}{2},\textcolor{myred}{5}) & \textcolor{myblue}{2,3,7,15}/\textcolor{myblue}{9,10,6,14} & \textcolor{myblue}{2,6} \\
(\textcolor{myred}{2},\textcolor{myred}{6}) & \textcolor{myblue}{2,3,7,15}/\textcolor{myblue}{10,11,7,15} & \textcolor{myblue}{7} \\
(\textcolor{myred}{2},\textcolor{myred}{7}) & \textcolor{myblue}{2,3,7,15}/\textcolor{myblue}{11,8,4,12} & \textcolor{myblue}{3,4} \\
\hline
(\textcolor{myred}{3},\textcolor{myred}{4}) & \textcolor{myblue}{3,0,4,12}/\textcolor{myblue}{8,9,5,13} & \textcolor{myblue}{0,5} \\
(\textcolor{myred}{3},\textcolor{myred}{5}) & \textcolor{myblue}{3,0,4,12}/\textcolor{myblue}{9,10,6,14} & \textcolor{myblue}{3,2,6} \\
(\textcolor{myred}{3},\textcolor{myred}{6}) & \textcolor{myblue}{3,0,4,12}/\textcolor{myblue}{10,11,7,15} & \textcolor{myblue}{3,7} \\
(\textcolor{myred}{3},\textcolor{myred}{7}) & \textcolor{myblue}{3,0,4,12}/\textcolor{myblue}{11,8,4,12} & \textcolor{myblue}{4} \\
\hline
(\textcolor{myred}{4},\textcolor{myred}{5}) & \textcolor{myblue}{8,9,5,13}/\textcolor{myblue}{9,10,6,14} & \textcolor{myblue}{9} \\
(\textcolor{myred}{4},\textcolor{myred}{6}) & \textcolor{myblue}{8,9,5,13}/\textcolor{myblue}{10,11,7,15} & \textcolor{myblue}{8,11} \\
(\textcolor{myred}{4},\textcolor{myred}{7}) & \textcolor{myblue}{8,9,5,13}/\textcolor{myblue}{11,8,4,12} & \textcolor{myblue}{8} \\
\hline
(\textcolor{myred}{5},\textcolor{myred}{6}) & \textcolor{myblue}{9,10,6,14}/\textcolor{myblue}{10,11,7,15} & \textcolor{myblue}{10} \\
(\textcolor{myred}{5},\textcolor{myred}{7}) & \textcolor{myblue}{9,10,6,14}/\textcolor{myblue}{11,8,4,12} & \textcolor{myblue}{9,8} \\
\hline
(\textcolor{myred}{6},\textcolor{myred}{7}) & \textcolor{myblue}{10,11,7,15}/\textcolor{myblue}{11,8,4,12} & \textcolor{myblue}{11} \\
\hline
\end{tabular}
\caption{Shortest paths between vertices for the $16$-qubit toric code on the square lattice (Fig.~\ref{fig:decoding}a). The first column lists all possible pairs of vertices, whose adjacent qubits are given in the middle column. The shortest paths (sets of qubits) are given in the third column. The weight of one path is exactly the number of qubits it contains.}\label{tab: paths 16 qubits}
\end{minipage}
\hfill
\begin{minipage}{0.66\textwidth}
\centering
\begin{tabular}{c|c|c||c|c|c}
\hline
\textbf{Vertices} & \textbf{Adjacency} & \textbf{Path}
&
\textbf{Vertices} & \textbf{Adjacency} & \textbf{Path} \\
\hline
(\textcolor{myred}{0},\textcolor{myred}{1}) & \textcolor{myblue}{0,5,6}/\textcolor{myblue}{0,1,15} & \textcolor{myblue}{0} &
(\textcolor{myred}{3},\textcolor{myred}{8}) & \textcolor{myblue}{2,3,16}/\textcolor{myblue}{10,11,7} & \textcolor{myblue}{2,7} \\
(\textcolor{myred}{0},\textcolor{myred}{2}) & \textcolor{myblue}{0,5,6}/\textcolor{myblue}{1,2,7} & \textcolor{myblue}{0,1} &
(\textcolor{myred}{3},\textcolor{myred}{9}) & \textcolor{myblue}{2,3,16}/\textcolor{myblue}{11,12,16} & \textcolor{myblue}{16} \\
(\textcolor{myred}{0},\textcolor{myred}{3}) & \textcolor{myblue}{0,5,6}/\textcolor{myblue}{2,3,16} & \textcolor{myblue}{0,1,2} &
(\textcolor{myred}{3},\textcolor{myred}{10}) & \textcolor{myblue}{2,3,16}/\textcolor{myblue}{12,13,8} & \textcolor{myblue}{3,8} \\
(\textcolor{myred}{0},\textcolor{myred}{4}) & \textcolor{myblue}{0,5,6}/\textcolor{myblue}{3,4,8} & \textcolor{myblue}{5,4} &
(\textcolor{myred}{3},\textcolor{myred}{11}) & \textcolor{myblue}{2,3,16}/\textcolor{myblue}{13,14,17} & \textcolor{myblue}{3,4,17} \\
(\textcolor{myred}{0},\textcolor{myred}{5}) & \textcolor{myblue}{0,5,6}/\textcolor{myblue}{4,5,17} & \textcolor{myblue}{5} &
(\textcolor{myred}{4},\textcolor{myred}{5}) & \textcolor{myblue}{3,4,8}/\textcolor{myblue}{4,5,17} & \textcolor{myblue}{4} \\
(\textcolor{myred}{0},\textcolor{myred}{6}) & \textcolor{myblue}{0,5,6}/\textcolor{myblue}{6,9,14} & \textcolor{myblue}{6} &
(\textcolor{myred}{4},\textcolor{myred}{6}) & \textcolor{myblue}{3,4,8}/\textcolor{myblue}{6,9,14} & \textcolor{myblue}{4,5,6} \\
(\textcolor{myred}{0},\textcolor{myred}{7}) & \textcolor{myblue}{0,5,6}/\textcolor{myblue}{9,10,15} & \textcolor{myblue}{0,15} &
(\textcolor{myred}{4},\textcolor{myred}{7}) & \textcolor{myblue}{3,4,8}/\textcolor{myblue}{9,10,15} & \textcolor{myblue}{3,2,1,15} \\
(\textcolor{myred}{0},\textcolor{myred}{8}) & \textcolor{myblue}{0,5,6}/\textcolor{myblue}{10,11,7} & \textcolor{myblue}{0,1,7} &
(\textcolor{myred}{4},\textcolor{myred}{8}) & \textcolor{myblue}{3,4,8}/\textcolor{myblue}{10,11,7} & \textcolor{myblue}{3,2,7} \\
(\textcolor{myred}{0},\textcolor{myred}{9}) & \textcolor{myblue}{0,5,6}/\textcolor{myblue}{11,12,16} & \textcolor{myblue}{0,1,2,16} &
(\textcolor{myred}{4},\textcolor{myred}{9}) & \textcolor{myblue}{3,4,8}/\textcolor{myblue}{11,12,16} & \textcolor{myblue}{3,16} \\
(\textcolor{myred}{0},\textcolor{myred}{10}) & \textcolor{myblue}{0,5,6}/\textcolor{myblue}{12,13,8} & \textcolor{myblue}{5,4,8} &
(\textcolor{myred}{4},\textcolor{myred}{10}) & \textcolor{myblue}{3,4,8}/\textcolor{myblue}{12,13,8} & \textcolor{myblue}{8} \\
(\textcolor{myred}{0},\textcolor{myred}{11}) & \textcolor{myblue}{0,5,6}/\textcolor{myblue}{13,14,17} & \textcolor{myblue}{5,17} &
(\textcolor{myred}{4},\textcolor{myred}{11}) & \textcolor{myblue}{3,4,8}/\textcolor{myblue}{13,14,17} & \textcolor{myblue}{4,17} \\
\hline
(\textcolor{myred}{1},\textcolor{myred}{2}) & \textcolor{myblue}{0,1,15}/\textcolor{myblue}{1,2,7} & \textcolor{myblue}{1} &
(\textcolor{myred}{5},\textcolor{myred}{6}) & \textcolor{myblue}{4,5,17}/\textcolor{myblue}{6,9,14} & \textcolor{myblue}{5,6} \\
(\textcolor{myred}{1},\textcolor{myred}{3}) & \textcolor{myblue}{0,1,15}/\textcolor{myblue}{2,3,16} & \textcolor{myblue}{1,2} &
(\textcolor{myred}{5},\textcolor{myred}{7}) & \textcolor{myblue}{4,5,17}/\textcolor{myblue}{9,10,15} & \textcolor{myblue}{5,0,15} \\
(\textcolor{myred}{1},\textcolor{myred}{4}) & \textcolor{myblue}{0,1,15}/\textcolor{myblue}{3,4,8} & \textcolor{myblue}{0,5,4} &
(\textcolor{myred}{5},\textcolor{myred}{8}) & \textcolor{myblue}{4,5,17}/\textcolor{myblue}{10,11,7} & \textcolor{myblue}{4,3,2,7} \\
(\textcolor{myred}{1},\textcolor{myred}{5}) & \textcolor{myblue}{0,1,15}/\textcolor{myblue}{4,5,17} & \textcolor{myblue}{0,5} &
(\textcolor{myred}{5},\textcolor{myred}{9}) & \textcolor{myblue}{4,5,17}/\textcolor{myblue}{11,12,16} & \textcolor{myblue}{4,3,16} \\
(\textcolor{myred}{1},\textcolor{myred}{6}) & \textcolor{myblue}{0,1,15}/\textcolor{myblue}{6,9,14} & \textcolor{myblue}{0,6} &
(\textcolor{myred}{5},\textcolor{myred}{10}) & \textcolor{myblue}{4,5,17}/\textcolor{myblue}{12,13,8} & \textcolor{myblue}{4,8} \\
(\textcolor{myred}{1},\textcolor{myred}{7}) & \textcolor{myblue}{0,1,15}/\textcolor{myblue}{9,10,15} & \textcolor{myblue}{15} &
(\textcolor{myred}{5},\textcolor{myred}{11}) & \textcolor{myblue}{4,5,17}/\textcolor{myblue}{13,14,17} & \textcolor{myblue}{17} \\
(\textcolor{myred}{1},\textcolor{myred}{8}) & \textcolor{myblue}{0,1,15}/\textcolor{myblue}{10,11,7} & \textcolor{myblue}{1,7} &
(\textcolor{myred}{6},\textcolor{myred}{7}) & \textcolor{myblue}{6,9,14}/\textcolor{myblue}{9,10,15} & \textcolor{myblue}{9} \\
(\textcolor{myred}{1},\textcolor{myred}{9}) & \textcolor{myblue}{0,1,15}/\textcolor{myblue}{11,12,16} & \textcolor{myblue}{1,2,16} &
(\textcolor{myred}{6},\textcolor{myred}{8}) & \textcolor{myblue}{6,9,14}/\textcolor{myblue}{10,11,7} & \textcolor{myblue}{9,10} \\
(\textcolor{myred}{1},\textcolor{myred}{10}) & \textcolor{myblue}{0,1,15}/\textcolor{myblue}{12,13,8} & \textcolor{myblue}{0,5,4,8} &
(\textcolor{myred}{6},\textcolor{myred}{9}) & \textcolor{myblue}{6,9,14}/\textcolor{myblue}{11,12,16} & \textcolor{myblue}{9,10,11} \\
(\textcolor{myred}{1},\textcolor{myred}{11}) & \textcolor{myblue}{0,1,15}/\textcolor{myblue}{13,14,17} & \textcolor{myblue}{0,5,17} &
(\textcolor{myred}{6},\textcolor{myred}{10}) & \textcolor{myblue}{6,9,14}/\textcolor{myblue}{12,13,8} & \textcolor{myblue}{14,13} \\
\hline
(\textcolor{myred}{2},\textcolor{myred}{3}) & \textcolor{myblue}{1,2,7}/\textcolor{myblue}{2,3,16} & \textcolor{myblue}{2} &
(\textcolor{myred}{6},\textcolor{myred}{11}) & \textcolor{myblue}{6,9,14}/\textcolor{myblue}{13,14,17} & \textcolor{myblue}{14} \\
(\textcolor{myred}{2},\textcolor{myred}{4}) & \textcolor{myblue}{1,2,7}/\textcolor{myblue}{3,4,8} & \textcolor{myblue}{2,3} &
(\textcolor{myred}{7},\textcolor{myred}{8}) & \textcolor{myblue}{9,10,15}/\textcolor{myblue}{10,11,7} & \textcolor{myblue}{10} \\
(\textcolor{myred}{2},\textcolor{myred}{5}) & \textcolor{myblue}{1,2,7}/\textcolor{myblue}{4,5,17} & \textcolor{myblue}{1,0,5} &
(\textcolor{myred}{7},\textcolor{myred}{9}) & \textcolor{myblue}{9,10,15}/\textcolor{myblue}{11,12,16} & \textcolor{myblue}{10,11} \\
(\textcolor{myred}{2},\textcolor{myred}{6}) & \textcolor{myblue}{1,2,7}/\textcolor{myblue}{6,9,14} & \textcolor{myblue}{1,0,6} &
(\textcolor{myred}{7},\textcolor{myred}{10}) & \textcolor{myblue}{9,10,15}/\textcolor{myblue}{12,13,8} & \textcolor{myblue}{9,14,13} \\
(\textcolor{myred}{2},\textcolor{myred}{7}) & \textcolor{myblue}{1,2,7}/\textcolor{myblue}{9,10,15} & \textcolor{myblue}{1,15} &
(\textcolor{myred}{7},\textcolor{myred}{11}) & \textcolor{myblue}{9,10,15}/\textcolor{myblue}{13,14,17} & \textcolor{myblue}{9,14} \\
(\textcolor{myred}{2},\textcolor{myred}{8}) & \textcolor{myblue}{1,2,7}/\textcolor{myblue}{10,11,7} & \textcolor{myblue}{7} &
(\textcolor{myred}{8},\textcolor{myred}{9}) & \textcolor{myblue}{10,11,7}/\textcolor{myblue}{11,12,16} & \textcolor{myblue}{11} \\
(\textcolor{myred}{2},\textcolor{myred}{9}) & \textcolor{myblue}{1,2,7}/\textcolor{myblue}{11,12,16} & \textcolor{myblue}{2,16} &
(\textcolor{myred}{8},\textcolor{myred}{10}) & \textcolor{myblue}{10,11,7}/\textcolor{myblue}{12,13,8} & \textcolor{myblue}{11,12} \\
(\textcolor{myred}{2},\textcolor{myred}{10}) & \textcolor{myblue}{1,2,7}/\textcolor{myblue}{12,13,8} & \textcolor{myblue}{2,3,8} &
(\textcolor{myred}{8},\textcolor{myred}{11}) & \textcolor{myblue}{10,11,7}/\textcolor{myblue}{13,14,17} & \textcolor{myblue}{10,9,14} \\
(\textcolor{myred}{2},\textcolor{myred}{11}) & \textcolor{myblue}{1,2,7}/\textcolor{myblue}{13,14,17} & \textcolor{myblue}{1,0,5,17} &
(\textcolor{myred}{9},\textcolor{myred}{10}) & \textcolor{myblue}{11,12,16}/\textcolor{myblue}{12,13,8} & \textcolor{myblue}{12} \\
(\textcolor{myred}{3},\textcolor{myred}{4}) & \textcolor{myblue}{2,3,16}/\textcolor{myblue}{3,4,8} & \textcolor{myblue}{3} &
(\textcolor{myred}{9},\textcolor{myred}{11}) & \textcolor{myblue}{11,12,16}/\textcolor{myblue}{13,14,17} & \textcolor{myblue}{12,13} \\
(\textcolor{myred}{3},\textcolor{myred}{5}) & \textcolor{myblue}{2,3,16}/\textcolor{myblue}{4,5,17} & \textcolor{myblue}{3,4} &
(\textcolor{myred}{10},\textcolor{myred}{11}) & \textcolor{myblue}{12,13,8}/\textcolor{myblue}{13,14,17} & \textcolor{myblue}{13} \\
(\textcolor{myred}{3},\textcolor{myred}{6}) & \textcolor{myblue}{2,3,16}/\textcolor{myblue}{6,9,14} & \textcolor{myblue}{2,1,0,6} & & & \\
(\textcolor{myred}{3},\textcolor{myred}{7}) & \textcolor{myblue}{2,3,16}/\textcolor{myblue}{9,10,15} & \textcolor{myblue}{2,1,15} & & & \\
\hline
\end{tabular}
\caption{Shortest paths between vertices for the $18$-qubit toric code on the honeycomb lattice (Fig.~\ref{fig:decoding}b). The first column lists all possible pairs of vertices, whose adjacent qubits are given in the middle column. The shortest paths are given in the third column. The weight of one path is exactly the number of qubits it contains.}\label{tab: paths 18 qubits}
\end{minipage}

\end{table*}

To see how the decoding and error correction are realized, we next provide several sample codes for different background lattices. We start with a simple $16$-qubit toric code on the square lattice in Fig.~\ref{fig:decoding}a. We label qubits by \textcolor{myblue}{blue} numbers and vertices by \textcolor{myred}{red} numbers. We also label the syndrome outcome by its vertex numbers. For example, when a $Z$ error string connects $\textcolor{myred}{0}$ and $\textcolor{myred}{2}$, we have the syndrome outcome $S_{\textcolor{myred}{0,2}}=-1$ while others $1$. To apply the Minimum-Weight Perfect Matching (MWPM) decoder, we next assign minimum weights to all paths between vertices. As we are applying uniform errors to all physical qubits, the minimum weight is chosen as the shortest distance between two vertices. For the $16$-qubit code, such shortest paths are given in Tab.~\ref{tab: paths 16 qubits}, where the first column lists all possible pairs of vertices, the middle column lists the corresponding adjacent qubits for two vertices, and the third column computes the shortest path between them. The weight of a path is exactly the number of qubits it contains. For example, $\textcolor{myred}{(0,1)}$ has weight $1$ and $\textcolor{myred}{(0,6)}$ has weight $3$. After a round of syndrome measurement, we have even number of $-1$ syndromes. We then compute the minimum weight for all pairs of syndromes/vertices, and apply $Z's$ along the overall shortest paths. A parallel weight computation for the $18$-qubit toric code on the honeycomb lattice (Fig.~\ref{fig:decoding}b) can be done with a different (shortest) path identification in Tab.~\ref{tab: paths 18 qubits}.

To efficiently enlarge the lattice size $N\propto L_x\times L_y$ and to compute the QFI from classical syndrome patterns, we filter out the case where logical $Z_L$ error happens. This is more likely for a ``stripe''-like lattice (e.g. Fig.~\ref{fig:decoding}a,b,c) around the shortest lateral size $min(L_x,L_y)$. However, we note that the probability of a logical error is much smaller than a stabilizer error. We therefore expect the QFI $F_{q,Z}$ to be determined by the bulk properties of the code and keep only patterns with no logical errors after error correction. For example, for the stripe lattice in Fig.~\ref{fig:decoding}a, we keep only patterns satisfying $X_4X_5X_6X_7=1$ and $X_0X_8=1$. Similarly, we require $X_6X_7X_8=1$ and $X_1X_{10}=1$ for (b), $X_{12}X_{13}X_{14}X_{15}X_{16}X_{17}=1$ and $X_1X_{19}=1$ for (c), and $X_6X_7X_8=1$ and $X_1X_{10}X_{19}X_{28}=1$ for (d). After this filtering, we expect the QFI $F_Z$ is solely determined by the bulk property of the code rather than lattice shapes. Indeed, in Fig.~\ref{fig:QFI_vs_p}, we observe only minor misalignment for stripe lattice in Fig.~\ref{fig:decoding}c and square lattice in Fig.~\ref{fig:decoding}d with identical number of qubits $N=36$.

With this assumption, the QFI $F_{q,Z}$ can be explicitly computed using Eq.~\eqref{eq:FZ_GHZ}, where $P(\pm)=\frac{n_\pm}{n_{tot}}$. $n_\pm$ here is, after error correction, the number of times that $X_0X_1X_2X_3=\pm1$ (enforced symmetry in the probe state) for Fig.~\ref{fig:decoding}a, $X_0X_1X_2X_3X_4X_5=\pm1$ for Fig.~\ref{fig:decoding}b, $X_0X_1X_2X_3\cdots X_{11}=\pm1$ for Fig.~\ref{fig:decoding}c, and $\left[X_0X_1X_2X_3X_4X_5\right]\left[X_{18}X_{19}X_{20}X_{21}X_{22}X_{23}\right]=\pm1$ for Fig.~\ref{fig:decoding}d. Finally, $n=n_+ +n_-$ is the total repetitions of the numerics. This classical method can be directly applied to larger lattice sizes to compute the QFI $F_Z$, which is given in Fig.2 in the main text.

\subsection{$F_{q,X}$ for $\ket{\text{CSS}}$ from Bacon-Shor codes}

The computation of $F_{q,X}$ (note the dephasing channel now contains i.i.d. $X$ errors) is obtained by decoding via majority-voting, where an effective dephasing occurs only if half of the qubits in one row dephase, with the probability $p_{eff}=1-\sum_{k=0}^{\lfloor\frac{n-1}{2}\rfloor}\binom{n}{k}p^{k}(1-p)^{n-k}$. Eq.~\eqref{eq:FZ_GHZ} then yields
\begin{equation}
F_{q,X}=4\left[1-2\sum_{k=0}^{\lfloor\frac{n-1}{2}\rfloor}\binom{n}{k}p^{k}(1-p)^{n-k}\right]^{2m}m^2.    
\end{equation}
We also refer interested readers to \cite{W2014} for further details.

\section{QFI under mixture of parallel and perpendicular noise for partial error correction}

\begin{figure}[b]
    \centering
    \includegraphics[width=0.9\linewidth]{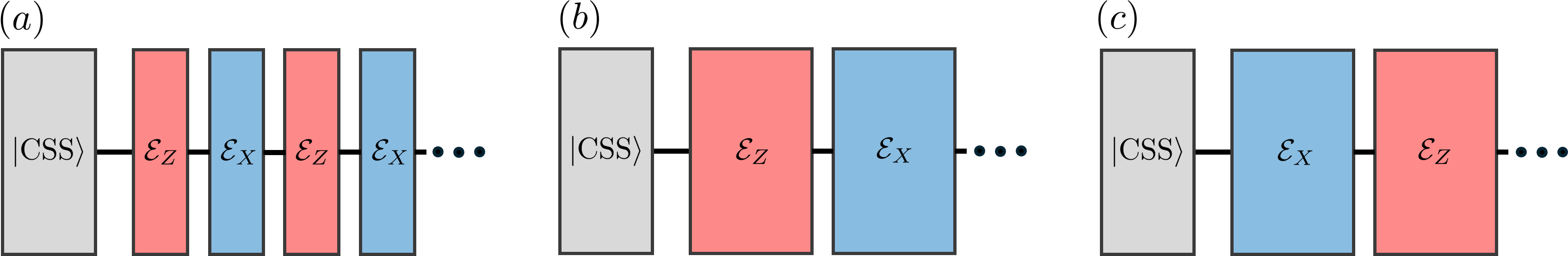}
    \caption{A mixture of i.i.d. $Z$ and $X$ noise. The alternative application of $Z$ and $X$ in (a) is equivalent to first apply a total $Z$ followed by a total $X$ in (b) or first apply a total $X$ followed by a total $Z$ in (c).}
    \label{fig:mixture_noise}
\end{figure}

In this appendix, we prove a major argument in the main text: The super-SQL scaling under both parallel and perpendicular noise persists to a mixture of them. To see this, we begin with a GHZ state under a mixture of i.i.d. $Z$ and i.i.d. $X$ noise, and then move on to the $\ket{\text{CSS}}$ probe state.

We first explain what we mean by a mixture of i.i.d. $Z$ and $X$ noise. We focus on here when the error happens before the phase imprinting, postponing the discussion on error after the phase imprinting to the next appendix. In this setting, the more practical noise scenario is an application of i.i.d. $Z$ ($\mathcal{E}_Z$) and i.i.d. $X$ ($\mathcal{E}_X$) noise in an alternating manner, see Fig.~\ref{fig:mixture_noise}a. Since two Pauli noise commute with each other: $\mathcal{E}_{Z}\cdot \mathcal{E}_X[\cdot]=\mathcal{E}_{X}\cdot \mathcal{E}_Z[\cdot]$, we can move all types of $Z,X$ noise together, merging them into a single i.i.d. $Z,X$ noise channel with larger strength $p_Z,p_X$. The commutation of Pauli channel allows us to first apply the $Z$ noise (\ref{fig:mixture_noise}b) or $X$ noise (\ref{fig:mixture_noise}c). Without loss of generality, we study the case where all i.i.d. $X$ noise goes first.

Assume the total i.i.d. $X$ noise takes the following form $\mathcal{E}_X=\prod_j\mathcal{E}_{X,j}$ with 
\begin{equation}
    \mathcal{E}_{X,j}[\rho]=(1-p_X)\rho+p_X X_j\rho X_j,
\end{equation}
the GHZ state after $\mathcal{E}_X$ becomes a mixture of  $\ket{s,\Bar{s},+}\equiv\frac{1}{\sqrt{2}}\left[\ket{s}+\ket{\Bar{s}}\right]$, with $s=(s_1,s_2,...,s_N)\in\left\{1,-1\right\}^{N}$ a binomial bit string recording the $s_Z$ for each qubit. We introduce by $\Bar{s}$ the binomial, bit-wise complement of $s$. Specifically, we have
\begin{equation}
    \mathcal{E}_{X}\left[\ket{\text{GHZ}}\bra{\text{GHZ}}\right]=\sum_{\left\{s,\Bar{s}\right\}}p_{\left\{s,\Bar{s}\right\}}\ket{s,\Bar{s},+}\bra{s,\Bar{s},+}.
    \label{eq:rho_X_GHZ}
\end{equation}
Importantly, we have the orthogonality condition $\braket{s_1,\Bar{s}_1,+|s_2,\Bar{s}_2,+}=\delta_{s_1,s_2}$. The diagonal element $p_{\left\{s,\Bar{s}\right\}}=p_X^{h(s)}(1-p_{X})^{N-h(s)}+p_{X}^{N-h(s)}(1-p_X)^{h(s)}$ with $h(s)=\sum_{j}\frac{s_j+1}{2}$ the Hamming weight of the bit string that counts the number of spin $+1$. Plugging Eq.~\eqref{eq:rho_X_GHZ} into the defining equation of the QFI, and note that $\sum_jZ_j\ket{s,\Bar{s},+}=\left[N-2h(s)\right]\ket{s,\Bar{s},-}$ with $\ket{s,\Bar{s},-}\equiv\frac{1}{\sqrt{2}}\left[\ket{s}-\ket{\Bar{s}}\right]$, we have
\begin{align}
    F_{q,X}=&2\sum_{\left\{s,\Bar{s}\right\}}\frac{\left(p_{\left\{s,\Bar{s}\right\}}-0\right)^2}{p_{\left\{s,\Bar{s}\right\}}+0}\left|\braket{s,\Bar{s},+|\sum_jZ_j|s,\Bar{s},-}\right|^2+2\sum_{\left\{s,\Bar{s}\right\}}\frac{\left(0-p_{\left\{s,\Bar{s}\right\}}\right)^2}{0+p_{\left\{s,\Bar{s}\right\}}}\left|\braket{s,\Bar{s},-|\sum_jZ_j|s,\Bar{s},+}\right|^2\\
    =&4\sum_{\left\{s,\Bar{s}\right\}}p_{\left\{s,\Bar{s}\right\}}\left[N-2h(s)\right]^2 \\
    =&4\sum_{m=0}^{N}\binom{N}{m}p_X^m(1-p_X)^{N-m}(N-2m)^2\\
    =&4(1-2p_X)^2N^2+4\left[1-(1-2p_X)^2\right]N,\label{eq:FX_GHZ}
\end{align}
where we have used the orthogonality $\braket{s,\Bar{s},+|s,\Bar{s},-}=0$. We note that the last line is exactly $4$ times the variance of $\sum_jZ_j$ with respect to $\mathcal{E}_X\left[\ket{\text{GHZ}}\bra{\text{GHZ}}\right]$, which is derived conveniently from the symmetry argument in earlier context. Here, we write down another method to compute it, which is helpful when the variance argument breaks down for a broken strong symmetry.

We next apply the i.i.d. $Z$ noise $\mathcal{E}_{Z}$ to the GHZ state. At this stage, no error correction is possible and a pure $\ket{s,\Bar{s},+}$ becomes 
\begin{equation}
    \mathcal{E}_{Z}\left[\ket{s,\Bar{s},+}\bra{s,\Bar{s},+}\right]=p_+\ket{s,\Bar{s},+}\bra{s,\Bar{s},+}+p_-\ket{s,\Bar{s},-}\bra{s,\Bar{s},-},
\end{equation}
with $p_\pm=\frac{1\pm(1-2p_Z)^N}{2}$ \textit{independent} of $s,\Bar{s}$. Therefore, the GHZ state after \textbf{both} $\mathcal{E}_X$ and $\mathcal{E}_Z$ can be diagonalized as
\begin{equation}
    \mathcal{E}_{Z}\cdot\mathcal{E}_X\left[\ket{\text{GHZ}}\bra{\text{GHZ}}\right]=\sum_{\left\{s,\Bar{s}\right\}}p_{\left\{s,\Bar{s}\right\}}\left(p_+\ket{s,\Bar{s},+}\bra{s,\Bar{s},+}+p_-\ket{s,\Bar{s},-}\bra{s,\Bar{s},-}\right).\label{eq:rho_XZ_GHZ}
\end{equation}
Importantly, $\ket{s,\Bar{s},\pm}$ form an orthogonal basis and, therefore, the QFI can be computed as
\begin{align}
    F_{q,XZ}=&2\sum_{s_1,s_2}\frac{\left(p_{\left\{s_1,\Bar{s}_1\right\}}p_+-p_{\left\{s_2,\Bar{s}_2\right\}}p_-\right)^2}{p_{\left\{s_1,\Bar{s}_1\right\}}p_++p_{\left\{s_2,\Bar{s}_2\right\}}p_-}\left|\braket{s_2,\Bar{s}_2,+|\sum_jZ_j|s_1,\Bar{s}_1,-}\right|^2\\
    &+2\sum_{s_1,s_2}\frac{\left(p_{\left\{s_1,\Bar{s}_1\right\}}p_--p_{\left\{s_2,\Bar{s}_2\right\}}p_+\right)^2}{p_{\left\{s_1,\Bar{s}_1\right\}}p_-+p_{\left\{s_2,\Bar{s}_2\right\}}p_+}\left|\braket{s_2,\Bar{s}_2,-|\sum_jZ_j|s_1,\Bar{s}_1,+}\right|^2\\
    =&2\sum_{s_1,s_2}\frac{\left(p_{\left\{s_1,\Bar{s}_1\right\}}p_+-p_{\left\{s_2,\Bar{s}_2\right\}}p_-\right)^2}{p_{\left\{s_1,\Bar{s}_1\right\}}p_++p_{\left\{s_2,\Bar{s}_2\right\}}p_-}\left[N-2h(s_1)\right]^2\delta_{s_1,s_2}\\
    &+2\sum_{s_1,s_2}\frac{\left(p_{\left\{s_1,\Bar{s}_1\right\}}p_--p_{\left\{s_2,\Bar{s}_2\right\}}p_+\right)^2}{p_{\left\{s_1,\Bar{s}_1\right\}}p_-+p_{\left\{s_2,\Bar{s}_2\right\}}p_+}\left[N-2h(s_1)\right]^2\delta_{s_1,s_2}\\
    =&4\left(p_+-p_-\right)^2\sum_{s}p_{\left\{s,\Bar{s}\right\}}\left[N-2h(s)\right]^2\\
    =&4(1-2p_Z)^{2N}\left\{(1-2p_X)^2N^2+\left[1-(1-2p_X)^2\right]N\right\}\\
    =&\frac{F_{q,X}F_{q,Z}}{4N^2},
\end{align}
where $F_{q,Z}=4(1-2p_Z)^{2N}N^2$ is the QFI under \textit{sole} $\mathcal{E}_Z$ noise and $F_{q,X}$ is the QFI under \textit{sole} $\mathcal{E}_X$ noise given in Eq.~\eqref{eq:FX_GHZ} for the GHZ state. We can easily see that the GHZ itself is fragile to a general mixture of $X,Z$ noise even if its $F_{q,X}$ preserves the HL scaling, and the reason is a sub-SQL scaling in $F_q,Z$.

The same computation for $F_{q,XZ}$ can be done similarly as for the $\ket{\text{CSS}}$ state, by the following modification: 

1. the bit string $s=(s_1,...,s_{N_{Z}})$ now denotes the value of $S_{Z}$ syndrome and the relevant $\ket{s,\Bar{s},\pm}$ are defined as 
\begin{equation}
    \ket{s,\Bar{s},\pm}=\frac{1}{\sqrt{2}}\left(\ket{S_{Z,1}=s1,...S_{Z,N_Z}=s_{N_Z}}\pm\ket{S_{Z,1}=\Bar{s}1,...S_{Z,N_Z}=\Bar{s}_{N_Z}}\right).
\end{equation}
Their orthogonality condition remains as $\braket{s_{1},\Bar{s}_1,\pm|s_2,\Bar{s}_2,\pm}=\delta_{s_1,s_2}$ and $\braket{s_{1},\Bar{s}_1,\mp|s_2,\Bar{s}_2,\pm}=0$.

2. The expression for $p_{s,\Bar{s}}$ is modified, whose $F_{q,X}$ is now directly computed as the variance of $\sum_j S_{Z,j}$, since the strong $T_X$ symmetry is preserved. However, structurally, we still write $\mathcal{E}_X[\ket{\text{CSS}}\bra{\text{CSS}}]=\sum_{s,\Bar{s}}p_{\left\{s,\Bar{s}\right\}}\ket{s,\Bar{s},+}\bra{s,\Bar{s},+}$ and
\begin{equation}
    F_{q,X}=4\sum_{\left\{s,\Bar{s}\right\}}p_{\left\{s,\Bar{s}\right\}}\left[N_Z-2h(s)\right]^2
\end{equation}.

3. The expression for $p_\pm$ is also modified. Nevertheless, under the assumption that no logical $Z$ error occurs, $p_\pm$ remains independent of $s,\Bar{s}$ and $F_{q,Z}$ remains as Eq.~\eqref{eq:FZ_GHZ}.

With the above clarification, we can write down $F_{q,XZ}$ in the exactly same structure as the GHZ state case, giving
\begin{align}
    F_{q,XZ}=&2\sum_{s_1,s_2}\frac{\left(p_{\left\{s_1,\Bar{s}_1\right\}}p_+-p_{\left\{s_2,\Bar{s}_2\right\}}p_-\right)^2}{p_{\left\{s_1,\Bar{s}_1\right\}}p_++p_{\left\{s_2,\Bar{s}_2\right\}}p_-}\left|\braket{s_2,\Bar{s}_2,+|\sum_jS_{Z,j}|s_1,\Bar{s}_1,-}\right|^2\\
    &+2\sum_{s_1,s_2}\frac{\left(p_{\left\{s_1,\Bar{s}_1\right\}}p_--p_{\left\{s_2,\Bar{s}_2\right\}}p_+\right)^2}{p_{\left\{s_1,\Bar{s}_1\right\}}p_-+p_{\left\{s_2,\Bar{s}_2\right\}}p_+}\left|\braket{s_2,\Bar{s}_2,-|\sum_jS_{Z,j}|s_1,\Bar{s}_1,+}\right|^2\\
    =&2\sum_{s_1,s_2}\frac{\left(p_{\left\{s_1,\Bar{s}_1\right\}}p_+-p_{\left\{s_2,\Bar{s}_2\right\}}p_-\right)^2}{p_{\left\{s_1,\Bar{s}_1\right\}}p_++p_{\left\{s_2,\Bar{s}_2\right\}}p_-}\left[N_Z-2h(s_1)\right]^2\delta_{s_1,s_2}\\
    &+2\sum_{s_1,s_2}\frac{\left(p_{\left\{s_1,\Bar{s}_1\right\}}p_--p_{\left\{s_2,\Bar{s}_2\right\}}p_+\right)^2}{p_{\left\{s_1,\Bar{s}_1\right\}}p_-+p_{\left\{s_2,\Bar{s}_2\right\}}p_+}\left[N_Z-2h(s_1)\right]^2\delta_{s_1,s_2}\\
    =&4\left(p_+-p_-\right)^2\sum_{s}p_{\left\{s,\Bar{s}\right\}}\left[N_Z-2h(s)\right]^2\\
    =&\frac{F_{q,X}F_{q,Z}}{4N_Z^2}.\label{eq:F_XZ CSS}
\end{align}
An important difference is, as shown in the main text, by adaptively increasing the stabilizer weight $l(N)$ as a function of the number of physical qubits, we are able to keep both $F_{q,X}$ and $F_{q,Z}$ in the super-SQL region. Therefore, we expect that such a metrological advantage extends to the mixture case, where the QFI becomes a \textit{multiplicative average}. 

We note the denominator in Eq.~\eqref{eq:F_XZ CSS} is given by the number of stabilizers $N_Z$. Promoting to the ultimate bound given by the number of physical qubits $N$, we instead have:
\begin{equation}
    F_{q,XZ}>\frac{F_{q,X}F_{q,Z}}{4N^2}.
\end{equation}
Suppose $F_{q,X}\sim N^{\nu_{X}}$ and $F_{q,Z}\sim N^{\nu_{Z}}$, then $F_{q,XZ}$ scales as $N^{\nu_{X}+\nu_Z-2}$, and therefore remains super-SQL for any mixture of $X$ and $Z$ noise for $\nu_{X}+\nu_Z>3$. This is satisfied, for instance, in the Bacon-Shor code in Fig.~4a of the Letter, where $\nu_X+\nu_Z>3.8$ at noise strength $p=0.01$. When this condition fails, we are still safe as long as the noise strength is below a finite threshold. For example, rewrite $F_{q,X}=4(1-p_X)^{2l}N_Z+O(N_Z)$, Eq.~\eqref{eq:F_XZ CSS} now becomes
\begin{equation}
    F_{q,XZ}=(1-2p_X)^{2l}F_Z+O(N_{Z}^{-1}).
\end{equation}
In the Bacon-Shor codes, we observe that $l\sim O\left[\ln(N)\right]$ is sufficient to keep $F_{q,Z}$ a super-SQL scaling in $N$. In \cite{W2014}, it is further argued that this super-SQL scaling saturates the HL scaling $\nu_Z=2$. Therefore, denoting $l=\alpha\ln{N}$, we can lower-bound $F_{q,XZ}$ via
\begin{equation}
    F_{q,XZ}\geq (1-2p_X)^{2\alpha\ln N}F_{q,Z}.
\end{equation}
Therefore, as long as $p_X<\frac{1}{2}\left(1-e^\frac{1-\nu_Z}{2\alpha}\right)$, we are safe for a super-SQL scaling in $F_{q,XZ}$.

\section{Proof to Theorem 1}

\begin{theorem}
    The QFI of the probe state $\rho_\theta$ under i.i.d. Pauli errors described by $E_P[\rho_\theta]=(1-p_x-p_y-p_z)\rho_\theta+p_x X\rho_\theta X +p_y Y \rho_\theta Y +p_z Z \rho_\theta Z$ $+$ partial error correction is lower bounded by the QFI under the i.i.d. dephasing channel $\mathcal{E}_{z}[\rho_\theta]=(1-p)\rho_{\theta}+pZ\rho_{\theta}Z$+ partial error correction with the identification $p=p_y+p_z$.
\end{theorem}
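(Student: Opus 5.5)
The plan is to decompose the i.i.d. Pauli channel into a dephasing part and a part that is harmless after imprinting, and then use data processing for the QFI. First write $Y=iXZ$. The single-site Pauli channel $E_P$ then factorizes, up to a possible residual, as a composition of a $Z$-type channel and an $X$-type channel. Concretely, I will look for $q_x,q_z$ such that
\begin{equation}
E_P=\mathcal{E}'\circ\mathcal{E}_{z}\,,
\end{equation}
with $\mathcal{E}_z$ of strength $p=p_y+p_z$. Here $\mathcal{E}'$ is a channel whose Kraus operators are $\{I,X\}$ mixed with a correction term. Exact factorization $E_P=\mathcal{E}_{X,q_x}\circ\mathcal{E}_{Z,q_z}$ only holds on the two-parameter family $p_y=q_xq_z$. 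I will therefore not insist on it. Instead I will work at the level of error patterns.

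Each i.i.d. Pauli pattern is a product $X^{a}Z^{b}$ up to phase, with $a,b\in\{0,1\}^N$. The marginal of $b$ is exactly i.i.d. Bernoulli with parameter $p_y+p_z$, because a $Z$-component appears iff the local error is $Y$ or $Z$. The noisy, corrected state is therefore
\begin{equation}
\sum_{b}\Pr(b)\,\mathcal{R}\circ\mathcal{C}_b\big[\rho_\theta\big],
\end{equation}
where $\mathcal{C}_b$ is a channel conditioned on $b$: it applies $Z^b$ followed by some mixture over $a$ of $X^a$. Since the partial correction measures only $S_X$ syndromes, which are insensitive to $X^a$, the decoder's action depends only on $b$. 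Hence $\mathcal{R}$ commutes with the conditional $X$-mixture, and the $X$-part acts after the $Z$ correction.

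The state is thus obtained by applying a $\theta$-independent channel to a flagged version of the dephasing-plus-correction state. The argument proceeds in three steps:
\begin{enumerate}
\item Augment the dephasing scenario with a classical register recording $b$. The flagged state is $\sum_b \Pr(b)\,|b\rangle\langle b|\otimes \mathcal{R}[Z^b\rho_\theta Z^b]$.
\item Apply the controlled channel $\sum_b |b\rangle\langle b|\otimes \mathcal{X}_b$, where $\mathcal{X}_b$ is the conditional $X$-mixture. Then trace out the register. This yields exactly the Pauli-noise state.
\item Invoke monotonicity of QFI under $\theta$-independent CPTP maps.
\end{enumerate}
These steps give $F_q[\text{Pauli}]\le F_q[\text{flagged dephasing}]$, which is the wrong direction. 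The lower bound must come from the opposite mechanism: recovering $\theta$-information despite the $X$-part.

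For the correct direction, I will use the observation in the main text: $X$ errors after imprinting leave the QFI intact because $E^\dagger_X(T_X)=T_X$. The plan is to lower bound the QFI by the classical Fisher information of the $T_X$ measurement (the error-propagation bound), which the QFI always dominates. Under $E_P$ followed by correction, I compute $\langle T_X\rangle_\theta$ via the dual channel. Each $X$ Kraus factor commutes with $T_X$. Each $Z$ or $Y$ factor anticommutes or commutes with $T_X$ exactly as the corresponding $Z$ does, and the sign is fixed by $b$ alone. The correction $\mathcal{R}$ depends only on $b$. Therefore $\langle T_X\rangle_\theta$ and $\langle T_X^2\rangle_\theta=1$ coincide with their values under pure dephasing of strength $p_y+p_z$ plus correction, because the $X$ components drop out of the expectation entirely. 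The error-propagation Fisher information is thus identical in the two cases.

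Finally, I argue that for the dephasing channel plus correction, $T_X$ is optimal. This follows from the GHZ-type analysis leading to Eq.~\eqref{eq:FZ_GHZ}: the state is a mixture of $\ket{s,\bar s,\pm}$-type states with $\theta$-independent weights, and $T_X$ measurement reaches $4(P_+-P_-)^2N_Z^2$. Chaining these gives
\begin{equation}
F_q[\text{Pauli}]\ge F_{T_X}[\text{Pauli}]=F_{T_X}[\text{dephasing}]=F_q[\text{dephasing}].
\end{equation}
The main obstacle is this last equality. Optimality of $T_X$ for the dephased state should be checked carefully, including small $\theta$ versus general $\theta$, and the assumption of no logical errors. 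I would first verify it in the two-dimensional effective subspace spanned by $\ket{S_Z=1\cdots1}$ and $\ket{S_Z=-1\cdots-1}$, where the computation reduces to the Bloch-vector length $P_+-P_-$.
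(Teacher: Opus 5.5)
Your final argument is the paper's proof. You bound the QFI from below by the error-propagation Fisher information of $T_X$, note that the $X$-components of each Pauli pattern drop out of $\langle T_X\rangle_\theta$ while the $Z$-components are i.i.d.\ with rate $p_y+p_z$ and alone fix the $S_X$-based correction, and then use optimality of $T_X$ for the dephased-and-corrected state (the data-processing detour can simply be deleted). The caveat you flagged is real and the paper's proof shares it: for the dephased state the two-outcome Fisher information of $T_X$ is $4N_Z^2(1-2p_{odd})^2\sin^2(2N_Z\theta)/[1-(1-2p_{odd})^2\cos^2(2N_Z\theta)]$, which equals the QFI $4(1-2p_{odd})^2N_Z^2$ only when $\cos(2N_Z\theta)=0$, so the chain closes only at such working points, and away from them the inequality can genuinely fail (for the trivial single-qubit code with $p_y=0$ and small $p_x,p_z>0$, the QFI at $\theta=0$ is $4(1-2p_x-2p_z)^2<4(1-2p_z)^2$).
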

\begin{proof}
    We choose the detection operator as $T_X$ and evaluate $\braket{T_X}_\theta$ under $E_P$ $+$ error correction $S_X$. We first note that it has the same $\braket{T_X}_\theta$ as the i.i.d. channel $\mathcal{E}_j(\rho_\theta)=p_I\rho_\theta+p_x X_j\rho_\theta X_j+(p_y+p_z)Z_j\rho_\theta Z_j +$error correction, since $\left\{Z_j,T_X\right\}=\left\{Y_j,T_X\right\}=0$ and the $S_{X}$ stabilizers cannot distinguish them. As $[X_j,T_X]=0$, it has identity action in $\braket{T_X}_\theta$ and therefore the same expectation value can be obtained by applying i.i.d. dephasing channel $\mathcal{E}_{z}[\rho_\theta]=(1-p_z-p_y)\rho_{\theta}+(p_y+p_z)Z\rho_{\theta}Z$+error correction.
    
    For $\mathcal{E}_{z}[\rho_\theta]=(1-p_z-p_y)\rho_{\theta}+(p_y+p_z)Z\rho_{\theta}Z$+error correction, $\braket{T_X}_\theta=(1-2p_{odd})\cos(2N_Z\theta)$ with $p_{odd}$ the probability that odd number of $S_Z$ loop errors remain after correcting the $S_X$ syndromes. The error propagation formula for $\braket{T_X}_\theta$ yields a lower bound for the QFI under the same channel as $F_{q}\geq 4(1-2p_{odd})^2N_Z^2$, which exactly equals the QFI itself. Therefore, the QFI of $\rho_\theta$ under $E_P$ $+$ error correction $S_X$ is lower bounded by the QFI under the i.i.d. dephasing channel $\mathcal{E}_{z}+$error correction $S_X$. 
\end{proof}
\end{document}